\def\Comment#1{\textsl{$\langle\!\langle$#1\/$\rangle\!\rangle$}}
\newcommand{\myparskip}{3pt}
\newtheorem{lemma}{Lemma}[section]
\newtheorem{theorem}[lemma]{Theorem}
\newtheorem{corollary}[lemma]{Corollary}
\newtheorem{prop}[lemma]{Proposition}
\newtheorem*{claim}{Claim}
\newtheorem{remark}[lemma]{Remark}
\newcommand{\etal}{{\em et al.}~}
\DeclareMathOperator*{\Ex}{\mathbb{E}}
\renewenvironment{proof}{\vspace{-0.1in}\noindent{\bf Proof:}}%
        {\hspace*{\fill}$\Box$\par}
\newenvironment{proofof}[1]{\smallskip\noindent{\bf Proof of #1:}}%
        {\hspace*{\fill}$\Box$\par}
        {\hspace*{\fill}$\Box$\par}
\def\script#1{\mathcal{#1}}
\def\bx{\textbf{x}}
\def\tbx{\tilde{\textbf{x}}}
\def\sp{\;|\;}
\def\b1{\textbf{1}}
\def\MCSAfull{Minimum Submodular-Cost Allocation\xspace}
\def\MCSA{\textsc{MSCA}\xspace}
\def\SubMP{\textsc{Sub-MP}\xspace}
\def\kSubMP{\textsc{k-way-Sub-MP}\xspace}
\def\kHMC{\textsc{k-way-Hypergrap-Cut}\xspace}
\def\SubMPRel{\textsc{SubMP-Rel}\xspace}
\def\SymSubMP{\textsc{Sym-Sub-MP}\xspace}
\def\kSymSubMP{\textsc{k-way-Sym-Sub-MP}\xspace}
\def\HMC{\textsc{Hypergraph-MC}\xspace}
\def\AHMC{\textsc{Hypergraph-MC}\xspace}
\def\HMP{\textsc{Hypergraph-MP}\xspace}
\def\SymMPR{\textsc{SymSubMP-Rounding}\xspace}
\def\SubMPRH{\textsc{SubMP-Half-Rounding}\xspace}
\def\nodeMC{\textsc{Node-wt-MC}\xspace}
\def\MC{\textsc{Graph-MC}\xspace}
\def\optcr{{\textsc{OPT}_{\textsc{frac}}}\xspace}
\def\lovasz{Lov\'{a}sz\xspace}
\begin{document}
\title{Approximation Algorithms for Submodular Multiway Partition}

\author{
Chandra Chekuri
\thanks{
Dept. of Computer Science, University of Illinois, Urbana, IL 61801.
Supported in part by NSF grants CCF-0728782
and CCF-1016684.
{\tt chekuri@cs.illinois.edu}}
\and
Alina Ene
\thanks{
Dept. of Computer Science, University of Illinois, Urbana, IL 61801.
Supported in part by NSF grants CCF-0728782 and CCF-1016684.
{\tt ene1@illinois.edu}}
}

\date{\today}

\maketitle

\thispagestyle{empty}
\begin{abstract}
  We study algorithms for the {\sc Submodular Multiway Partition}
  problem (\SubMP). An instance of \SubMP consists of a finite ground
  set $V$, a subset of $k$ elements $S = \{s_1,s_2,\ldots,s_k\}$
  called terminals, and a non-negative submodular set function
  $f:2^V\rightarrow \mathbb{R}_+$ on $V$ provided as a value oracle.
  The goal is to partition $V$ into $k$ sets $A_1,\ldots,A_k$ such
  that for $1 \le i \le k$, $s_i \in A_i$ and $\sum_{i=1}^k f(A_i)$ is
  minimized. \SubMP generalizes some well-known problems such as the
  {\sc Multiway Cut} problem in graphs and hypergraphs, and the {\sc
    Node-weighed Multiway Cut} problem in graphs. \SubMP for arbitrary
  submodular functions (instead of just symmetric functions) was
  considered by Zhao, Nagamochi and Ibaraki \cite{ZhaoNI05}. Previous
  algorithms were based on greedy splitting and divide and
  conquer strategies. In very recent work \cite{ChekuriE11} we
  proposed a convex-programming relaxation for \SubMP based on the
  Lov\'asz-extension of a submodular function and showed its
  applicability for some special cases. In this paper we obtain the
  following results for arbitrary submodular functions via this
  relaxation.
  \begin{itemize}
  \item A $2$-approximation for \SubMP. This improves the
    $(k-1)$-approximation from \cite{ZhaoNI05}. 
  \item A $(1.5-1/k)$-approximation for \SubMP when $f$ is {\em
    symmetric}. This improves the $2(1-1/k)$-approximation
    from \cite{Queyranne99,ZhaoNI05}.
  \end{itemize}
\end{abstract}

\newpage
\section{Introduction}
In this paper we consider the approximability of the following
problem.  

\medskip 
\noindent {\sc Submodular Multiway Partition}
(\SubMP). Let $f:2^V \rightarrow \mathbb{R}_+$ be a non-negative
submodular set function\footnote{A set function $f:2^V \rightarrow
  \mathbb{R}$ is submodular iff $f(A) + f(B) \ge f(A \cap B) + f(A
  \cup B)$ for all $A,B \subseteq V$. Moreover, $f$ is symmetric if $f(A) =
  f(V-A)$ for all $A \subseteq V$.} over $V$ and let $S =
\{s_1,s_2,\ldots,s_k\}$ be a set of $k$ terminals from $V$. The
submodular multiway partition problem is to find a partition of $V$
into $A_1,\ldots,A_k$ such that $s_i \in A_i$ and $\sum_{i=1}^k
f(A_i)$ is minimized.  An important special case is when $f$ is
symmetric and we refer to it as \SymSubMP.

\noindent
\textbf{Motivation and Related Problems:} We are motivated to consider
\SubMP for two reasons. First, \SubMP generalizes several problems
that have been well-studied. We discuss them now.  Perhaps the most
well-known of the special cases is the {\sc Multiway Cut} problem in
graphs (\MC): the input is an undirected edge-weighted graph $G=(V,E)$
and the goal is remove a minimum weight set of edges to separate a
given set of $k$ terminals \cite{DahlhausJPSY92}. Although the goal is
to remove edges, one can see this as a partition problem, and in fact
as a special case of \SymSubMP with the cut-capacity function as $f$.
\MC is NP-hard and APX-hard to approximate even for $k=3$
\cite{DahlhausJPSY92}.  One obtains two interesting and related
problems if one generalizes \MC to \emph{hypergraphs}. Let
$G=(V,\script{E})$ be an edge-weighted hypergraph.

\HMC is the problem where the goal is to remove a minimum-weight set
of hyperedges to disconnect the given set of terminals.  {\sc
  Hypergraph Multiway Partition} problem (\HMP) is the special case of
\SymSubMP where $f$ is the hypergraph-cut function: $f(A) = \sum_{e
  \in \delta(A)} w(e)$ where $w(e)$ is the weight of $e$ and
$\delta(A)$ is the set of all hyperedges that intersect $A$ but are
not contained in $A$. The distinction between \HMC and \HMP is that in
the former a hyperedge incurs a cost only once if the vertices in it
are split across terminals while in \HMP the cost paid by a hyperedge
is the number of non-trivial pieces it is partitioned into. Both
problems have several applications, in particular for circuit
partitioning problems in VLSI design \cite{AlpertK95}. We wish to draw
special attention to \HMC since it is approximation equivalent to the
{\sc Node-weighted Multiway Cut} problem in graphs (\nodeMC) where the
nodes have weights and the goal is to remove a minimum-weight subset
of nodes to disconnect a given set of terminals
\cite{GargVY94,GargVY04}.  An important motivation to consider \SubMP
is that \HMC can be cast as a special case of it \cite{ZhaoNI05}; the
reduction is simple, yet interesting, and we stress that the resulting
function $f$ is {\em not} necessarily symmetric.  Since \nodeMC is
approximation-equivalent to \HMC and \HMC is a special case of \SubMP
it follows that one can view \nodeMC indirectly as a partition problem
with an appropriate submodular function. We believe this is a useful
observation that should be more widely-known. In fact, \SubMP (and
related generalizations) were introduced by Zhao, Nagamochi and
Ibaraki \cite{ZhaoNI05} partly motivated by the applications to
hypergraph cut and partition problems.

A second important motivation to consider \SubMP and \SymSubMP is the
following question. To what extent do current algorithms and
techniques for important special cases such as \MC and \nodeMC depend
on the fact that the underlying structure is a graph (or a
hypergraph)? Or is it the case that submodularity of the cut function
the key underlying phenomenon?  For \MC Dahlhaus \etal
\cite{DahlhausJPSY92} gave a simple $2(1-1/k)$-approximation via the
isolating cut heuristic. Queyranne \cite{Queyranne99} showed that this
same bound can be achieved for \SymSubMP (see also \cite{ZhaoNI05}).
For \MC, Calinescu, Karloff and Rabani \cite{CalinescuKR98}, in a
breakthrough, obtained a $1.5-1/k$ approximation via an interesting
geometric relaxation. The integrality gap for this relaxation has been
subsequently improved to $1.3438$ by Karger \etal \cite{KargerKSTY99}.
Once again it is natural to ask if this geometric relaxation is
specific to graphs and whether corresponding results exist for
\SymSubMP. Further, the current best approximation for \SubMP is $(k-1)$
\cite{ZhaoNI05} and is obtained via a simple greedy splitting
algorithm. \SubMP generalizes \nodeMC and the latter has a $2(1-1/k)$
approximation \cite{GargVY04} but it is a non-trivial LP
relaxation based algorithm. Therefore it is reasonable to expect that
one needs a mathematical programming relaxation to obtain a constant
factor approximation for \SubMP.

In very recent work \cite{ChekuriE11} we developed a simple and
straightforward convex-programming relaxation for \SubMP via the
Lov\'{a}sz-extension of a submodular function (we discuss this in more
detail below). An interesting observation is that this specializes to
the CKR-relaxation when we consider \MC! A natural question, that was
raised in \cite{ChekuriE11}, is whether the convex relaxation can be
used to obtain a better than $2$ approximation for \SymSubMP. In this
paper we answer this question in the positive and also obtain a
$2$-approximation for \SubMP improving the known $(k-1)$-approximation
\cite{ZhaoNI05}. We now describe the convex relaxation.

\medskip \noindent \textbf{A convex relaxation via the Lov\'{a}sz
  extension}.  The relaxation \SubMPRel for \SubMP was
introduced in \cite{ChekuriE11}. It is based on the well-known
Lov\'asz extension of a submodular function which we describe for
completeness.  Let $V$ be a finite ground set of size $n$, and let $f:
2^V \rightarrow \mathbb{R}$ be a real-valued set function. By
representing a set by its characteristic vector, we can think of $f$
as a function that assigns a value to each vertex of the boolean
hypercube $\{0, 1\}^n$. The Lov\'{a}sz extension $\hat{f}$ extends $f$
to all of $[0, 1]^n$ and is defined as follows\footnote{The standard
  definition is slightly different but is equivalent to the one we
  give; see \cite{Vondrak09}, \cite{ChekuriE11}.}:
	$$\hat{f}(\bx) = \Ex_{\theta \in
	[0,1]}\left[f(\bx^{\theta})\right] =
	\int_0^1 f(\bx^{\theta}) d\theta$$
where $\bx^{\theta} \in \{0, 1\}^n$ for a given vector $\bx \in
[0,1]^n$ is defined as: $x_i^{\theta} = 1$ if $x_i \geq \theta$ and
$0$ otherwise.
Lov\'asz showed that $\hat{f}$ is convex if and only if $f$ is
submodular \cite{Lovasz83}.

\SubMP is defined as a partition problem. We can equivalently
interpret it as an {\em allocation} problem (also a labeling problem)
where for each $v \in V$ we decide which of the $k$ terminals it is
allocated to. Thus we have non-negative variables $x(v,i)$ for $v \in
V$ and $1 \le i \le k$ and an allocation is implied by the simple
constraint $\sum_{i=1}^k x(v,i) = 1$ for each $v$. Of course a
terminal $s_i$ is allocated to itself. The only complexity is in the
objective function since $f$ is submodular. However, the Lov\'{a}sz
extension gives a direct and simple way to express the objective
function.  Let $\bx_i$ be the vector obtained by restricting
$\bx$ to the $i$'th terminal $s_i$; that is $\bx_i =
(x(v_1,i),\ldots,x(v_n,i))$. If $\bx_i$ is integral
$\hat{f}(\bx_i) = f(A_i)$ where $A_i$ is the support of $\bx_i$.
Therefore we obtain the following relaxation.

\begin{center}
\begin{boxedminipage}{0.5\linewidth}
\vspace{-0.2in}
\begin{align*}
& \textbf{\SubMPRel}\\
\min \quad & \sum_{i = 1}^k \hat{f}(\bx_i)\\
& \sum_{i = 1}^k x(v, i)  &= \qquad 1 & \qquad \forall v\\
& x(s_i, i)  &= \qquad 1 & \qquad \forall i\\
& x(v, i)  &\ge \qquad 0 & \qquad \forall v, i
\end{align*}
\end{boxedminipage}
\end{center}

The above relaxation can be solved in polynomial time\footnote{The
running time is polynomial in $n$ and $\log\left(\max_{S \subseteq V}
f(S)\right)$.} via the ellipsoid method \cite{ChekuriE11}. We give
algorithms to round an optimum fractional solution to \SubMPRel and
obtain the following two results which also establish corresponding
upper bounds on the integrality gap of \SubMPRel.

\begin{theorem} \label{thm:smp-symmetric}
  There is a $(1.5-1/k)$-approximation for \SymSubMP.
\end{theorem}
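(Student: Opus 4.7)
The plan is to solve \SubMPRel optimally via the ellipsoid method to obtain a fractional solution $\bx^*$, then round it with a CKR-style randomized procedure adapted to the Lov\'asz extension. Sample $\theta \in (0, 1]$ uniformly and a uniformly random permutation $\sigma$ of the $k$ terminals; for $j = 1, \ldots, k-1$ put into $A_{\sigma(j)}$ every still-unassigned vertex $v$ with $x^*(v, \sigma(j)) \ge \theta$, and send all remaining vertices to $A_{\sigma(k)}$. Since $x^*(s_i, i) = 1$ each terminal $s_i$ lands in $A_i$, so the output is a feasible partition.

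The key step is to estimate $\Ex[f(A_i)]$ for each terminal $i$ in two regimes determined by $\sigma$. When $i$ is not in the last position of $\sigma$, $A_i$ is contained in the level set $B_i^\theta = \{v : x^*(v, i) \ge \theta\}$, and a careful submodular bookkeeping should control $f(A_i)$ by values on level sets of $\bx_i^*$ whose $\theta$-integral is exactly $\hat{f}(\bx_i^*)$ by definition of the Lov\'asz extension. When $i$ is in the last position, $A_i = V \setminus \bigcup_{j<k} A_{\sigma(j)}$, and here I would invoke the symmetry assumption to write $f(A_i) = f(V \setminus A_i) = f(\bigcup_{j<k} A_{\sigma(j)})$, which by repeated submodularity is at most $\sum_{j<k} f(A_{\sigma(j)})$; the last-position cost is thereby charged back to the non-last parts. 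Since each terminal is last with probability $1/k$ under the random $\sigma$, this charging saves a $1/k$ fraction and is what is expected to drive the factor down from $2$ (as in the general, non-symmetric case) to $1.5 - 1/k$.

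Summing the per-terminal bounds and integrating over $\theta$ should give $\Ex\bigl[\sum_i f(A_i)\bigr] \le (1.5 - 1/k) \sum_i \hat{f}(\bx_i^*) \le (1.5 - 1/k)\cdot\opt$, which establishes the approximation ratio (a standard conditioning argument derandomizes the procedure). The main obstacle I anticipate is the non-last case in the absence of monotonicity: one cannot simply write $f(A_i) \le f(B_i^\theta)$, so one must decompose $B_i^\theta = A_i \cup C_i$ (with $C_i$ the part of $B_i^\theta$ claimed by terminals earlier in $\sigma$) and combine submodularity with the symmetry of $f$ to extract a usable level-set upper bound on $f(A_i)$. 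Calibrating this non-last bound against the symmetric last-slot bound so that the constants land exactly at $1.5 - 1/k$, rather than at a weaker $2 - 1/k$, is where I expect the main technical effort to lie.
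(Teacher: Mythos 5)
Your proposal is a CKR-style rounding with a random permutation, but this is precisely the approach the paper flags as \emph{not} known to work for general symmetric submodular functions: the paper remarks that ``one can understand the random permutation as an oblivious uncrossing operation that is particularly suited for submodular functions that depend on only two variables (in this case the edges); it is unclear whether this is suitable for arbitrary symmetric functions.'' The paper instead rounds by taking the level sets $A(i,\theta)$, explicitly uncrosses them using posi-modularity ($f(X)+f(Y) \ge f(X\setminus Y) + f(Y\setminus X)$, valid for symmetric submodular $f$), appends the unallocated set $U(\theta)$ to one block, and then invokes its main technical result (Theorem~\ref{thm:smp-main}) to bound $\Ex[f(U(\theta))] \le \tfrac12 \sum_i \hat f(\bx_i)$. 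That expected-cost bound on the unallocated set is the heart of the argument and is proved by a delicate induction over vertices; nothing in your sketch supplies a substitute.

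The specific gap in your plan is exactly the one you flag: for a non-last terminal $i$, $A_i = B_i^\theta \setminus C_i$ and you cannot write $f(A_i) \le f(B_i^\theta)$ because $f$ is not monotone. Posi-modularity gives $f(A_i) \le f(B_i^\theta) + f(C_i) - f(\emptyset)$ (since $C_i \subseteq B_i^\theta$), but $\Ex[f(C_i)]$ has no obvious handle. Moreover, even if the non-last costs were controlled by $\hat f(\bx_i)$, your last-slot charging $f(A_{\sigma(k)}) \le \sum_{j<k} f(A_{\sigma(j)})$ by subadditivity only doubles the non-last total and would land at roughly $2(1-1/k)$, not $1.5-1/k$. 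The $1.5-1/k$ bound for CKR on \MC crucially uses that the graph cut function decomposes edge by edge, letting one compute the separation probability of each pair; no such local decomposition exists for an arbitrary symmetric submodular $f$, and the paper's whole technical contribution (Theorem~\ref{thm:smp-main} and its inductive proof) is what replaces that local analysis. As written, your argument does not establish the claimed ratio.
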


\begin{theorem} \label{thm:smp-general}
  There is a $2$-approximation for \SubMP.
\end{theorem}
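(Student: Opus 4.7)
\medskip
\noindent\textbf{Proof Proposal.}
The plan is to round an optimal fractional solution $(\bx_1,\ldots,\bx_k)$ of \SubMPRel --- with value $\optcr = \sum_{i=1}^k \hat{f}(\bx_i)$ --- via the following asymmetric threshold scheme, and to show that the expected cost of the resulting integral partition is at most $2\optcr$. Sample $\theta$ uniformly at random from $(1/2, 1]$. For each $i \in \{1,\ldots,k-1\}$ set $A_i = \{v \in V : x(v,i) \geq \theta\}$; since $\sum_j x(v,j) = 1$ and $\theta > 1/2$, at most one terminal can satisfy $x(v,i) \geq \theta$ for any fixed $v$, so $A_1,\ldots,A_{k-1}$ are pairwise disjoint and each contains its corresponding $s_i$ (because $x(s_i,i)=1$). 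Defining $A_k = V \setminus \bigcup_{i<k} A_i$ (which contains $s_k$) produces a valid partition.

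For each $i < k$, non-negativity of $f$ and the Lov\'asz-integral identity immediately yield
\[
\Ex_\theta[f(A_i)] \;=\; 2 \int_{1/2}^1 f(\{v : x(v,i) \geq \theta\})\, d\theta \;\leq\; 2 \hat{f}(\bx_i).
\]
The main obstacle is the contribution of the residual terminal $A_k$, for which a per-terminal bound $\Ex_\theta[f(A_k)] \leq 2\hat{f}(\bx_k)$ does not hold in general: residual vertices $v$ with $\max_j x(v,j) < \theta$ all land in $A_k$ regardless of $x(v,k)$, so $\hat{f}(\bx_k)$ may be tiny while $f(A_k)$ is large. I therefore plan to argue in aggregate, showing
\[
\int_{1/2}^1 \sum_{i=1}^k f(A_i(\theta))\, d\theta \;\leq\; \int_0^1 \sum_{j=1}^k f(U_j(\theta))\, d\theta \;=\; \optcr,
\]
where $U_j(\theta) = \{v : x(v,j) \geq \theta\}$; doubling gives $\Ex_\theta[\sum_i f(A_i)] \leq 2\optcr$. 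Using the decomposition $A_k(\theta) = U_k(\theta) \cup R(\theta)$ with $R(\theta) = V\setminus\bigcup_j U_j(\theta)$ and applying submodularity $f(A_k) \leq f(U_k) + f(R)$ (since $f(\emptyset)\geq 0$), the aggregate bound reduces to the residual inequality
\[
\int_{1/2}^1 f(R(\theta))\, d\theta \;\leq\; \int_0^{1/2}\sum_{j=1}^k f(U_j(\theta))\, d\theta.
\]

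To establish this residual bound --- the heart of the argument --- I would attempt a pointwise inequality $f(R(\theta)) \leq \sum_j f(U_j(1-\theta))$ for every $\theta \in (1/2, 1)$, which integrates (via the substitution $\phi = 1-\theta$) to the desired integral bound. The key enabling observation is that $V = R(\theta) \cup \bigcup_j U_j(1-\theta)$: any non-residual $v$ has $\max_j x(v,j) \geq \theta > 1-\theta$ and therefore lies in some $U_j(1-\theta)$. Combining this set-cover identity with submodularity of $f$ --- in particular the subadditivity of non-negative submodular functions together with a careful use of the LP constraint $\sum_j x(v,j) = 1$ to pass from complement sets $V\setminus U_j(\theta)$ to the desired $U_j(1-\theta)$ --- should yield the pointwise bound. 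I expect this submodular manipulation, which must succeed without any symmetry assumption on $f$, to be the main technical difficulty; once it is in hand, chaining the decomposition $f(A_k)\leq f(U_k)+f(R)$ with the per-terminal bound for $i<k$ completes the $2$-approximation.
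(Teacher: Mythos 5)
Your algorithm and overall reduction match the paper's exactly: round at a uniform $\theta\in(1/2,1]$, note that the first $k-1$ level sets are disjoint, decompose the residual terminal's set via subadditivity as $f(A_k)\le f(U_k)+f(R)$, and reduce to the aggregate residual bound
\[
\int_{1/2}^1 f(R(\theta))\,d\theta \;\le\; \int_0^{1/2}\sum_{j=1}^k f(U_j(\theta))\,d\theta,
\]
which in the paper's notation is $\int_{1/2}^1 f(U(\theta))\,d\theta \le \sum_i\int_0^{1/2} f(A(i,\theta))\,d\theta$. Up to that point everything is right and is precisely the paper's framing (the paper obtains it as the $\delta=1/2$ case of Theorem~\ref{thm:smp-main}).

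The gap is in how you propose to prove that residual bound. The pointwise inequality $f(R(\theta))\le\sum_j f(U_j(1-\theta))$ for $\theta\in(1/2,1)$ is false. Take $k=3$, $V=\{v,s_1,s_2,s_3\}$, $x(v,j)=1/3$ for all $j$, and $f(A)=M\cdot\mathbf{1}[v\in A]$ (a nonnegative modular, hence submodular, function). For $\theta\in(1/2,2/3)$ you get $R(\theta)=\{v\}$ so $f(R(\theta))=M$, while $1-\theta\in(1/3,1/2)$ gives $U_j(1-\theta)=\{s_j\}$ and $\sum_j f(U_j(1-\theta))=0$. The cover observation $V=R(\theta)\cup\bigcup_j U_j(1-\theta)$ is true but only yields, via subadditivity, $f(V)\le f(R(\theta))+\sum_j f(U_j(1-\theta))$ --- an inequality pointing the wrong way, and no local manipulation of the LP constraint can reverse it. (In this example the integral version does hold: the left side is $M/2$ and the right side is $M$, so your target inequality is fine; it just cannot be obtained by integrating a pointwise bound.) The paper instead proves the integral inequality through Theorem~\ref{thm:smp-main}, whose proof orders the vertices by $\alpha_j=\max_i x(v_j,i)$ and runs a rather delicate induction on prefixes, with two charging lemmas (Lemmas~\ref{lem:charging1} and~\ref{lem:charging2}) whose contributions telescope globally; there is no pointwise analogue. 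So the ``main technical difficulty'' you flag is indeed the crux, and the specific route you sketch for it does not go through.
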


\begin{remark} {\em It is shown in \cite{GargVY04} that an
    $\alpha$-approximation for \nodeMC implies an
    $\alpha$-approximation for the {\sc Vertex Cover}
    problem. Therefore, improving the $2$-approximation for \SubMP is
    infeasible without a corresponding improvement for {\sc Vertex
      Cover}. It is easy to show that the integrality gap of \SubMPRel
    is is at least $2(1-1/k)$ even for instances of \HMC. The best
    lower bound on the integrality gap of \SubMPRel for \SymSubMP that
    we know is $8/(7 + \frac{1}{k-1})$, the same as that for \MC shown
    in \cite{FreundK00}.}
\end{remark}

\begin{remark}
  {\em Related to \SubMP and \SymSubMP are $k$-way partition problems
    \kSubMP and \kSymSubMP where no terminals are specified but the
    goal is to partition $V$ into $k$ non-empty sets $A_1,\ldots, A_k$
    to minimize $\sum_{i=1}^k f(A_i)$.  When $k$ is part of the input
    these problems are NP-Hard but for fixed $k$, \kSymSubMP admits a
    polynomial time algorithm \cite{Queyranne99} while the status of
    \kSubMP is still open. For fixed $k$ one can reduce \kSubMP to
    \SubMP by guessing $k$ terminals and this leads to a
    $2$-approximation via Theorem~\ref{thm:smp-general}, improving the
    previously known ratio of $(k+1-2\sqrt{k-1})$ \cite{OkumotoFN10}.}
\end{remark}

Our results build on some basic insights that were outlined in
\cite{ChekuriE11} where the special cases of \HMP and \HMC were
considered (among other results). In \cite{ChekuriE11} a
$(1.5-1/k)$-approximation for \HMP and a $\min\{2(1-1/k),
H_{\Delta}\}$-approximation for \HMC were given where $\Delta$ is the
maximum hyperedge degree and $H_i$ is the $i$'th harmonic number. Our
contribution in this paper is a non-trivial, and technical new result
on rounding \SubMPRel (Theorem~\ref{thm:smp-main} below) that applies
to an arbitrary submodular function. The {\em formulation} of the
statement of the theorem may appear natural in retrospect but was a
significant part of the difficulty. We now give an overview of the
rounding algorithm(s) and the new result. We then discuss and compare
to prior work.

\subsection{Overview of rounding algorithms and the main technical result}
\label{subsec:overview}

Let $\bx$ be a fractional allocation and $\sum_i \hat{f}(\bx_i)$ the
corresponding objective function value. How do we round $\bx$ to an
integral allocation while approximately preserving the convex
objective function? The simple insight in \cite{ChekuriE11} is that we
simply follow the definition of the Lov\'{a}sz function and do {\em
$\theta$-rounding}: pick a (random) threshold $\theta \in [0, 1]$ and
set $x(v_j,i) = 1$ if and only if $x(v_j,i) \ge \theta$. Let
$\tbx(\theta)$ be the resulting integer vector. If we pick $\theta$
uniformly at random in $[0,1]$ then the expected cost of $\sum_i
\Ex[f(\tbx_i(\theta))] = \sum_i \hat{f}(\bx_i)$. However, the problem
is that $\tbx(\theta)$ may not correspond to a feasible allocation.
Let $A(i,\theta)$ be the support of $\tbx_i$, that is, the set of
vertices assigned to $s_i$ for a given $\theta$. The reason that
$\tbx(\theta)$ may not be a feasible allocation is two-fold. First, a
vertex $v$ may be assigned to multiple terminals, that is, the sets
$A(i,\theta)$ for $i=1,\ldots,k$ may not be disjoint. Second, the
vertices $U(\theta) = V - \cup_{i=1}^k A(i,\theta)$ are unallocated.
We let $A(\theta) = \cup_{i=1}^k A(i,\theta)$ be the allocated set.

Our fundamental insight here is that the expected cost of the
unallocated set, that is $f(U(\theta))$, can be upper bounded
effectively. We can then assign the set $U(\theta)$ to an arbitrary
terminal and use sub-additivity of $f$ (since it is submodular and
non-negative).  Before we formalize this, we discuss how to overcome
the overlap in the sets $A(i,\theta)$. If $f$ is symmetric then it is
also posi-modular and one can do a simple uncrossing of the sets to
make them disjoint without increasing the cost. If $f$ is not
symmetric we cannot resort to this trick; in this case we ensure that
the sets $A(i,\theta)$ are disjoint by picking $\theta$ uniformly in
$(1/2,1]$ rather than $[0,1]$ (we call this half-rounding). Now the
unallocated set and the expected cost of the initial allocation are
some what more complex. We analyze both these scenarios using the
following theorem which is our main result. The theorem below has a
parameter $\delta \in (1/2,1]$ and this corresponds to rounding where
we pick $\theta$ uniformly from the interval $(1-\delta, 1]$.

\begin{theorem} \label{thm:smp-main} Let $\bx$ be a feasible solution
  to \SubMPRel. For $\theta \in [0,1]$ let $A(i,\theta) = \{ v \mid
  x(v,i) \ge \theta \}$, $A(\theta) = \cup_{i=1}^k A(i,\theta)$ and
  $U(\theta) = V - A(\theta)$.  For any $\delta \in [1/2, 1]$, we have
	$$\sum_{i = 1}^k \int_0^{\delta} f(A(i, \theta) d\theta \geq
	\int_0^{\delta} f(A(\theta))d\theta + \int_0^1
	f(U(\theta))d\theta.$$
\end{theorem}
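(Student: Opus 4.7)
The plan is to establish the inequality via a pointwise submodular rearrangement at each level $\theta$, followed by an application of sub-additivity of the Lov\'{a}sz extension. For each $v$ let $y_1(v)\ge y_2(v)\ge\cdots\ge y_k(v)$ denote the sorted sequence of $x(v,1),\dots,x(v,k)$, so by feasibility $\sum_{j} y_j(v)=1$, and set $B_j(\theta)=\{v:y_j(v)\ge\theta\}$; equivalently, $v\in B_j(\theta)$ iff $v$ lies in at least $j$ of the sets $A(1,\theta),\dots,A(k,\theta)$. In particular $B_1(\theta)=A(\theta)$. Since $y_j(v)\le y_1(v)$ and $y_1(v)+y_j(v)\le 1$ for $j\ge 2$, one obtains $y_j(v)\le 1/2$, so $B_j(\theta)=\emptyset$ for every $\theta>1/2$ when $j\ge 2$.

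Step 1 (pointwise uncrossing): at each $\theta$, I will show $\sum_{i=1}^k f(A(i,\theta))\ge\sum_{j=1}^k f(B_j(\theta))$. This is the classical submodular rearrangement inequality: iteratively replace any non-comparable pair $(A(i,\theta),A(i',\theta))$ by $(A(i,\theta)\cup A(i',\theta),\,A(i,\theta)\cap A(i',\theta))$, which weakly decreases the sum by submodularity while preserving the multiplicity vector $\bigl(|\{i:v\in A(i,\theta)\}|\bigr)_{v}$. The process terminates at a nested collection, which must coincide with $B_1(\theta)\supseteq\cdots\supseteq B_k(\theta)$ since that is the unique nested family with the prescribed multiplicities.

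Step 2 (integration and Lov\'{a}sz identification): integrating the pointwise bound over $[0,\delta]$ and peeling off $j=1$ gives $\sum_i\int_0^\delta f(A(i,\theta))d\theta\ge\int_0^\delta f(A(\theta))d\theta+\sum_{j\ge 2}\int_0^\delta f(B_j(\theta))d\theta.$ For $j\ge 2$, the hypothesis $\delta\ge 1/2$ together with $B_j(\theta)=\emptyset$ for $\theta>1/2$ (and the usual convention $f(\emptyset)=0$) yields $\int_0^\delta f(B_j(\theta))d\theta=\hat f(\textbf{y}_j)$, where $\textbf{y}_j=(y_j(v))_{v\in V}$. On the unallocated side, the substitution $\phi=1-\theta$ combined with the feasibility identity $1-y_1(v)=\sum_{j\ge 2}y_j(v)$ identifies $\int_0^1 f(U(\theta))d\theta=\hat f(\mathbf{u})$ with $\mathbf{u}=\sum_{j\ge 2}\textbf{y}_j$.

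Step 3 (sub-additivity): $\hat f$ is positively homogeneous and convex (the latter by submodularity), hence sub-additive, giving $\hat f(\mathbf{u})=\hat f\bigl(\sum_{j\ge 2}\textbf{y}_j\bigr)\le\sum_{j\ge 2}\hat f(\textbf{y}_j)$, which combined with the display in Step 2 yields the claim. The most delicate part will be Step 2: one has to guess that the "slack" $\sum_i f(A(i,\theta))-f(A(\theta))$ in the pointwise inequality is organised not by pairwise overlaps but by the level-sets of the \emph{sorted} coordinates $\textbf{y}_j$, and then notice that the two hypotheses $\delta\ge 1/2$ and $\sum_i x(v,i)=1$ line up precisely so that the tail integrals telescope into the single Lov\'{a}sz value $\hat f(\mathbf{u})$ coming from $U(\theta)$. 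Without the constraint $\sum_i x(v,i)=1$ the identity $1-y_1=\sum_{j\ge 2}y_j$ fails, and without $\delta\ge 1/2$ the integrals $\int_0^\delta f(B_j(\theta))d\theta$ would not cover the full support of $\textbf{y}_j$.
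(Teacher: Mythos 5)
Your proof is correct and follows a genuinely different route from the paper's. The paper orders the vertices by $\alpha_j=\max_i x(v_j,i)$, restricts the fractional solution to the prefix $V_j=\{v_1,\dots,v_j\}$, tracks the increment $\rho_j-\rho_{j-1}$ of the quantity $\rho_j=\sum_i\int_0^\delta f(A_j(i,\theta))d\theta$, and splits this increment between the dominant label $\ell_j$ and the remaining labels; the resulting two lemmas are summed and telescoped, with some auxiliary algebra pushed to an appendix. You instead uncross pointwise in $\theta$: for each fixed level you replace the (possibly overlapping) sets $A(1,\theta),\dots,A(k,\theta)$ by the unique chain $B_1(\theta)\supseteq\cdots\supseteq B_k(\theta)$ with the same element multiplicities, using the standard union/intersection potential argument so that submodularity alone gives $\sum_i f(A(i,\theta))\ge\sum_j f(B_j(\theta))$. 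Integrating, you then identify $\int_0^\delta f(B_j(\theta))d\theta$ for $j\ge 2$ with $\hat f(\textbf{y}_j)$ (here is where $\delta\ge 1/2$ enters: the sorted coordinates $y_j(v)$, $j\ge 2$, are at most $1/2$, so the truncated integral is the full Lov\'asz integral), recognize $\int_0^1 f(U(\theta))d\theta=\hat f(\mathbf{u})$ via $\phi\mapsto 1-\theta$ and $1-y_1=\sum_{j\ge 2}y_j$, and finish with sub-additivity of $\hat f$ (convexity plus positive homogeneity). This is a tidier, more structural argument: the slack is organized by the level sets of the sorted coordinate vectors rather than by a vertex ordering, and the telescoping is replaced by a single sub-additivity inequality.

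One small caveat you flagged yourself and should keep visible: Step 2 and the homogeneity used in Step 3 both require $f(\emptyset)=0$. The paper shares this dependency --- its restated version of the theorem carries an extra $(k\delta-\delta-1)f(\emptyset)$ term, which for $k=2$ and $\delta<1$ is negative, so the version without the $f(\emptyset)$ term also tacitly relies on $f(\emptyset)=0$ --- so this is not a defect relative to the paper, but the assumption should be made explicit rather than called ``the usual convention.'' Also worth a sentence: the terminating chain of the uncrossing process is a multiset (some $B_j(\theta)$ may coincide), which is fine since you only compare $\sum f$; and the inequality $B_j(\theta)=\emptyset$ for $\theta>1/2$, $j\ge 2$, follows exactly from $y_1(v)+y_j(v)\le\sum_i y_i(v)=1$, the feasibility constraint, precisely as you noted.
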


\noindent
By setting $\delta = 1$, we get the following corollary.

\begin{corollary} \label{cor:smp-main-simple}
	$$\sum_{i=1}^k \hat{f}(\bx_i) = \sum_{i = 1}^k \int_0^{1} f(A(i, \theta) d\theta \geq \int_0^1 f(A(\theta)) d\theta + \int_0^1
	f(U(\theta)) d\theta.$$
\end{corollary}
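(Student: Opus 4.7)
My plan is to recast the right-hand side in terms of the Lov\'asz extension $\hat{f}$ and then reduce the corollary to a clean algebraic lemma about $\hat{f}$. Let $\mathbf{y} \in [0,1]^n$ be defined by $\mathbf{y}(v) = \max_i x(v,i)$. Since $\{v : \mathbf{y}(v) \ge \theta\} = A(\theta)$, we have $\int_0^1 f(A(\theta))\,d\theta = \hat{f}(\mathbf{y})$. Similarly, $U(\theta) = \{v : \mathbf{y}(v) < \theta\}$; the change of variable $\eta = 1 - \theta$ identifies $U(\theta)$ with $\{v : (1-\mathbf{y})(v) > \eta\}$, which coincides with a level set of $\mathbf{1}-\mathbf{y}$ up to a measure-zero boundary, yielding $\int_0^1 f(U(\theta))\,d\theta = \hat{f}(\mathbf{1} - \mathbf{y})$. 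Thus the corollary is equivalent to
$$\sum_{i=1}^k \hat{f}(\bx_i) \,\ge\, \hat{f}(\mathbf{y}) + \hat{f}(\mathbf{1} - \mathbf{y}).$$

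By feasibility, $\sum_i \bx_i = \mathbf{1}$, so $\mathbf{1} - \mathbf{y} = \sum_i \bx_i - \bigvee_i \bx_i$. Therefore it suffices to prove the following general lemma: for any non-negative vectors $\mathbf{z}_1,\ldots,\mathbf{z}_k$,
$$\sum_{i=1}^k \hat{f}(\mathbf{z}_i) \,\ge\, \hat{f}\!\left(\bigvee_{i=1}^k \mathbf{z}_i\right) + \hat{f}\!\left(\sum_{i=1}^k \mathbf{z}_i - \bigvee_{i=1}^k \mathbf{z}_i\right).$$
The base case $k=2$ is submodularity of $\hat{f}$, namely $\hat{f}(\mathbf{z}_1) + \hat{f}(\mathbf{z}_2) \ge \hat{f}(\mathbf{z}_1 \vee \mathbf{z}_2) + \hat{f}(\mathbf{z}_1 \wedge \mathbf{z}_2)$, combined with the pointwise identity $\mathbf{z}_1 \wedge \mathbf{z}_2 = \mathbf{z}_1 + \mathbf{z}_2 - \mathbf{z}_1 \vee \mathbf{z}_2$.

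For the inductive step, I apply the hypothesis to $\mathbf{z}_1,\ldots,\mathbf{z}_{k-1}$, add $\hat{f}(\mathbf{z}_k)$, and use submodularity of $\hat{f}$ to peel off $\hat{f}(\bigvee_{i\le k}\mathbf{z}_i)$. This leaves the sum $\hat{f}(\bigvee_{i<k}\mathbf{z}_i \wedge \mathbf{z}_k) + \hat{f}(\sum_{i<k}\mathbf{z}_i - \bigvee_{i<k}\mathbf{z}_i)$, which I combine via subadditivity of $\hat{f}$ (which holds since $\hat{f}$ is convex and positively homogeneous on $\mathbb{R}_+^n$). A short case analysis, depending on whether $\mathbf{z}_k(v) \ge \max_{i<k}\mathbf{z}_i(v)$, verifies that $\bigvee_{i<k}\mathbf{z}_i \wedge \mathbf{z}_k + \sum_{i<k}\mathbf{z}_i - \bigvee_{i<k}\mathbf{z}_i = \sum_i \mathbf{z}_i - \bigvee_i \mathbf{z}_i$ pointwise, which closes the induction.

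The main obstacle is conceptual rather than computational: identifying the correct intermediate lemma (the ``peel off the running pointwise maximum'' inequality) and recognizing that, together with the feasibility constraint $\sum_i \bx_i = \mathbf{1}$, it directly produces the $\hat{f}(\mathbf{y}) + \hat{f}(\mathbf{1}-\mathbf{y})$ decomposition on the right. Once the lemma is in place, the only non-routine step is the pointwise identity used in the inductive step, which is elementary but must be checked carefully, since the combination of $\vee$, $\wedge$, and sum does not simplify by any generic algebraic rule.
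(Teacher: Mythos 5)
Your proof is correct, and it takes a genuinely different route from the paper. The paper obtains the corollary by specializing Theorem~\ref{thm:smp-main} at $\delta=1$, and proves that theorem by a vertex-by-vertex induction (ordering by $\alpha_j$, charging via Lemmas~\ref{lem:charging1} and \ref{lem:charging2} with explicit telescoping). You instead express both integrals on the right as Lov\'asz-extension values of $\mathbf{y}=\bigvee_i \bx_i$ and $\mathbf{1}-\mathbf{y}=\sum_i\bx_i-\bigvee_i\bx_i$ and reduce to a clean lattice inequality, $\sum_i\hat f(\mathbf z_i)\ge \hat f(\bigvee_i\mathbf z_i)+\hat f(\sum_i\mathbf z_i-\bigvee_i\mathbf z_i)$, proved by iterating the submodularity of $\hat f$ (i.e.\ $\hat f(\mathbf a)+\hat f(\mathbf b)\ge\hat f(\mathbf a\vee\mathbf b)+\hat f(\mathbf a\wedge\mathbf b)$) and its subadditivity, plus an elementary pointwise identity. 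One small imprecision: you attribute subadditivity of $\hat f$ to convexity plus positive homogeneity, but the paper's $\hat f$ satisfies $\hat f(\mathbf 0)=f(\emptyset)$ and so is positively homogeneous only when $f(\emptyset)=0$; subadditivity still holds here because for non-negative $f$ one actually has $\hat f(\mathbf a)+\hat f(\mathbf b)\ge\hat f(\mathbf a+\mathbf b)+f(\emptyset)$, so your argument goes through after this slight adjustment, and if you track those $f(\emptyset)$ savings you recover exactly the paper's $(k-2)f(\emptyset)$ slack at $\delta=1$. What your route buys is a short, conceptual, essentially algebraic proof of the $\delta=1$ corollary that makes the role of the identity $\sum_i\bx_i=\mathbf 1$ transparent; what it does not buy is the strengthening to $\delta<1$ in Theorem~\ref{thm:smp-main}, which the paper's more granular induction delivers and which is exactly what is needed for the $2$-approximation for general \SubMP (Corollary~\ref{cor:smp-main-simple} alone only yields a $4$-approximation there).
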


\noindent
Theorem~\ref{thm:smp-main} gives a unified analysis of our
algorithms for \SymSubMP and \SubMP. More precisely, we get
Theorem~\ref{thm:smp-symmetric} and Theorem~\ref{thm:smp-general} as
rather simple corollaries. Corollary~\ref{cor:smp-main-simple} is
sufficient to show that the \SymSubMP algorithm achieves a
$1.5$-approximation and that the \SubMP algorithm achieves a
$4$-approximation. In order to show that the \SubMP algorithm achieves
a $2$-approximation, we need the stronger statement of
Theorem~\ref{thm:smp-main}.

\subsection{Discussion and other related work}
\label{subsec:related-work}
Our recent work \cite{ChekuriE11} considered the \MCSAfull (\MCSA);
\SubMP is a special case. \MCSA also contains as special cases other
problems such as uniform metric labeling, non-metric facility
location, hub location and variants. The main insight in
\cite{ChekuriE11} is that a convex programming relaxation via the
\lovasz extension follows naturally for \MCSA and hence
$\theta$-rounding based algorithms provide a unified way to understand
and extend several previous results. The integrality gap of \SubMPRel for
\SymSubMP and \SubMP were posed as open questions following results for
the special cases of \HMC and \HMP. These results subsequently 
inspired the formulation of Theorem~\ref{thm:smp-main}.

Geometry plays a key role in the formulation, rounding and analysis of
the relaxation proposed for \MC by Calinescu, Karloff and Rabani
\cite{CalinescuKR98}; they obtained a $1.5-1/k$ approximation.  The
subsequent work of Karger \etal exploits the geometric aspects further
to obtain an improvement in the ratio to $1.3438$. If one views \MC as
a special case of \SymSubMP then the function $f$ under consideration
is the cut function. The cut function $f$ can be decomposed into
several simple submodular functions, corresponding to the edges, each
of which depends only on two vertices. This allows one to focus on the
probability that an edge is cut in the rounding process. Our work in
\cite{ChekuriE11} for \HMP and \HMC is also in a similar vein since
one can visualize and analyze the simple functions that arise from the
hypergraph cut function.  Our current analysis differs substantially
in that we no longer have a local handle on $f$, and hence the need
for Theorem~\ref{thm:smp-main}. It is interesting that the integrality
gap of \SubMPRel is at most $1.5-1/k$ for any symmetric function $f$,
matching the bound achieved by \cite{CalinescuKR98} for \MC. Our
rounding differs from that in \cite{CalinescuKR98}; both do
$\theta$-rounding but our algorithm uncrosses the sets $A(i,\theta)$
to make them disjoint while CKR-rounding does it by picking a random
permutation. One can understand the random permutation as an oblivious
uncrossing operation that is particularly suited for submodular
functions that depend on only two variables (in this case the edges); it is 
unclear whether this is suitable for arbitrary symmetric
functions.

As we remarked, \SymSubMP and \SubMP were considered in several papers
\cite{Queyranne99,ZhaoNI05,OkumotoFN10} with \HMC and \kHMC as
interesting applications for \SubMP. These papers primarily relied on
greedy methods.  It was noted in \cite{ZhaoNI05} that \HMC and \nodeMC
are essentially equivalent problems.  Garg, Vazirani and Yannakakis
\cite{GargVY04} gave a $2(1-1/k)$-approximation for \nodeMC
\cite{GargVY04} via a natural distance based LP relaxation; we note
that this result is non-trivial and relies on proving the existence of
a half-integral optimum fractional solution. Viewing \nodeMC as
equivalent to \HMC allows one to reduce it to \SubMP, and as we noted
in \cite{ChekuriE11} \SubMPRel gives a new and strictly stronger
relaxation for \nodeMC.  The previous best approximation for \SubMP
was $(k-1)$ \cite{ZhaoNI05}.  As we already remarked, 
obtaining a constant factor approximation for \SubMP without 
a mathematical programming relaxation like \SubMPRel is difficult
given the lack of combinatorial algorithms for special cases
like \nodeMC.

Submodular functions play a fundamental role in classical
combinatorial optimization. In recent years there have been
several new results on approximation algorithms for 
problems with objective functions that depend on submodular functions.
In addition to combinatorial techniques such as greedy and
local-search, mathematical programming methods have been particularly
important. It is natural to use the \lovasz extension for problems
involving minimization since the extension is convex; see
\cite{IwataN09,GoelKTW09,ChekuriE11} for instance. For maximization
problems involving submodular functions the multilinear extension
introduced in \cite{CalinescuCPV07} has been  useful
\cite{Vondrak08,KulikST09,LeeMNS10,Vondrak09,ChekuriVZ11}.

\section{Symmetric Submodular Multiway Partition}
We consider the following algorithm to round a feasible solution $\bx$
to \SubMPRel.
\begin{algo}
\underline{\textbf{\SymMPR}}
\\\> let $\bx$ be a feasible solution to \SubMPRel
\\\> pick $\theta \in [0, 1]$ uniformly at random
\\\> for $i = 1$ to $k$
\\\>\> $A(i, \theta) \leftarrow \{v \;|\; x(v, i) \geq \theta\}$
\\\> $A(\theta) \leftarrow \bigcup_{1 \leq i \leq k} A(i,
\theta)$
\\\> $U(\theta) \leftarrow V - A(\theta)$
\\\> for $i = 1$ to $k$
\\\>\> $A'_i \leftarrow A(i, \theta)$
\\\> \Comment{uncross $A'_1, \cdots, A'_k$}
\\\> while there exist $i \neq j$ such that $A'_i \cap A'_j \neq
\emptyset$
\\\>\> if $\left(f(A'_i) + f(A'_j - A'_i) \leq f(A'_i) + f(A'_j)\right)$
\\\>\>\> $A'_j \leftarrow A'_j - A'_i$
\\\>\> else
\\\>\>\> $A'_i \leftarrow A'_i - A'_j$
\\\> return $(A'_1, \cdots, A'_{k - 1}, A'_k \cup U(\theta))$
\end{algo}

\noindent
We prove the following theorem.

\begin{theorem}
  \label{thm:sym}
  Let $\bx$ be a feasible solution to \SubMPRel. If $f$ is
  a symmetric submodular function, the algorithm \textbf{\SymMPR} outputs a
  valid multiway partition of expected cost at most $1.5 \cdot
  \sum_{i=1}^k \hat{f}(\bx_i)$.
\end{theorem}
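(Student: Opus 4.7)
The plan is to bound the cost of \SymMPR in three stages: first show that the uncrossing loop only decreases $\sum_i f(A'_i)$, then bound the cost of the returned partition in terms of $\sum_i f(A'_i)$ and $f(U(\theta))$ via subadditivity, and finally use symmetry of $f$ together with Corollary~\ref{cor:smp-main-simple} to prove $\Ex[f(U(\theta))] \le \tfrac{1}{2}\optcr$.

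For the uncrossing step I would invoke \emph{posi-modularity} of symmetric submodular functions, namely $f(A - B) + f(B - A) \le f(A) + f(B)$, which follows from $f(A) = f(V-A)$ combined with submodularity applied to $V-A$ and $B$. Posi-modularity guarantees that whenever $A'_i \cap A'_j \ne \emptyset$, at least one of $f(A'_j - A'_i) \le f(A'_j)$ or $f(A'_i - A'_j) \le f(A'_i)$ must hold, so the conditional tested by \SymMPR correctly identifies a swap that does not increase $\sum_i f(A'_i)$, and in the \textbf{else} branch the other inequality is forced. Termination is immediate since each swap strictly decreases the potential $\sum_i |A'_i|$. Combining this with non-negative subadditivity $f(A'_k \cup U(\theta)) \le f(A'_k) + f(U(\theta))$ yields, for every $\theta$,
\[
\sum_{i<k} f(A'_i) + f(A'_k \cup U(\theta)) \;\le\; \sum_i f(A'_i) + f(U(\theta)) \;\le\; \sum_i f(A(i,\theta)) + f(U(\theta)).
\]

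Taking expectation over $\theta$ uniform in $[0,1]$ and using $\Ex[\sum_i f(A(i,\theta))] = \sum_i \hat{f}(\bx_i) = \optcr$, the expected cost of the output is at most $\optcr + \Ex[f(U(\theta))]$. The observation that unlocks the $1.5$ factor is that for symmetric $f$ one has $f(U(\theta)) = f(V - U(\theta)) = f(A(\theta))$ pointwise, so $\Ex[f(U(\theta))] = \Ex[f(A(\theta))]$. Substituting this identity into Corollary~\ref{cor:smp-main-simple}, which states $\optcr \ge \Ex[f(A(\theta))] + \Ex[f(U(\theta))]$, gives $\Ex[f(U(\theta))] \le \optcr/2$, and therefore the expected cost is at most $\optcr + \optcr/2 = 1.5\,\optcr$.

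The conceptual core of the argument is the symmetry identity $f(U(\theta)) = f(A(\theta))$ together with the way it collapses the bound from Corollary~\ref{cor:smp-main-simple} into $\Ex[f(U(\theta))] \le \optcr/2$; the remaining steps are essentially bookkeeping. The step that requires some care is the uncrossing loop, where I must both justify correctness via posi-modularity and verify termination via the decreasing potential $\sum_i |A'_i|$, but this is where I expect the only minor technical subtlety to lie.
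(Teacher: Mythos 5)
Your proof is correct and follows essentially the same route as the paper: posi-modularity justifies the uncrossing loop (the paper's Lemma~\ref{lem:uncrossing}), subadditivity absorbs $U(\theta)$ into $A'_k$, and the $\delta=1$ case of Theorem~\ref{thm:smp-main} combined with the symmetry identity $f(U(\theta)) = f(V - U(\theta))$ yields $\Ex[f(U(\theta))] \le \tfrac{1}{2}\optcr$ (the paper's Lemma~\ref{cor:sym-smp-unallocated}). The only thing you leave implicit is that uncrossing preserves the union $\cup_i A'_i = \cup_i A(i,\theta)$ and keeps each $s_i \in A'_i$, which the paper records in Lemma~\ref{lem:uncrossing} to justify that the output is a valid partition; worth a sentence, but not a gap in the cost analysis.
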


\noindent
The algorithm does $\theta$-rounding in the interval $[0,1]$ to obtain
(random) sets $A(i,\theta)$ for $i=1,\ldots,k$. Let $\optcr = \sum_{i
  = 1}^k \hat{f}(\bx_i)$. Note that $\Ex[f(A(i,\theta))] =
\hat{f}(\bx_i)$ and hence $\sum_{i=1}^k \Ex[f(A(i,\theta))] =
\sum_{i=1}^k \hat{f}(\bx_i) = \optcr$. The lemma below shows that
the uncrossing operation does not increase the cost. This is
was used in the context of
multiway cuts previously \cite{SvitkinaT04,ChekuriE11}; we
include the proof for completeness.

\begin{lemma}[\cite{ChekuriE11}] \label{lem:uncrossing}
	Let $A'_1, \ldots, A'_k$ denote the sets after
        uncrossing the sets $A(1,\theta),\ldots,A(k,\theta)$.  If $f$
        is a symmetric submodular function then $\cup_{i=1}^k A'_i =
        \cup_{i=1}^k A(i,\theta)$ and
		$$\sum_{i = 1}^k f(A'_i) \leq
		\sum_{i = 1}^k f(A(i, \theta)).$$ 
\end{lemma}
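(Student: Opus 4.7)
The plan is to base the proof on the standard observation that a symmetric submodular function $f$ is \emph{posimodular}, i.e.\ $f(A)+f(B)\ge f(A\setminus B)+f(B\setminus A)$ for all $A,B\subseteq V$. I would first establish this: applying submodularity to $A$ and $V\setminus B$ gives $f(A)+f(V\setminus B)\ge f(A\cap(V\setminus B))+f(A\cup(V\setminus B))$, and then symmetry rewrites $f(V\setminus B)=f(B)$ and $f(A\cup(V\setminus B))=f(V\setminus(B\setminus A))=f(B\setminus A)$, yielding the posimodular inequality.

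Next I would analyze a single iteration of the while loop on a pair $A'_i, A'_j$ with $A'_i\cap A'_j\ne\emptyset$. The goal is to show that the iteration does not increase $\sum_\ell f(A'_\ell)$. The algorithm tests the condition $f(A'_i)+f(A'_j\setminus A'_i)\le f(A'_i)+f(A'_j)$, i.e.\ $f(A'_j\setminus A'_i)\le f(A'_j)$; if it holds, it replaces $A'_j$ by $A'_j\setminus A'_i$, and the change in the sum is exactly $f(A'_j\setminus A'_i)-f(A'_j)\le 0$. If the condition fails, so $f(A'_j\setminus A'_i)>f(A'_j)$, then posimodularity $f(A'_i\setminus A'_j)+f(A'_j\setminus A'_i)\le f(A'_i)+f(A'_j)$ forces $f(A'_i\setminus A'_j)-f(A'_i)\le f(A'_j)-f(A'_j\setminus A'_i)<0$, so replacing $A'_i$ by $A'_i\setminus A'_j$ (the else branch) also strictly decreases the sum. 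Thus in either branch the sum is non-increasing; this is where the symmetry of $f$ is essential.

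To finish, I would verify invariance of the union and termination. For invariance: removing $A'_i\cap A'_j$ from $A'_j$ (or symmetrically) deletes only elements that still lie in $A'_i$ (resp.\ $A'_j$), so $\cup_\ell A'_\ell$ is unchanged across every iteration and hence equals $\cup_{i=1}^k A(i,\theta)$ at termination. For termination, note that each iteration strictly decreases $\sum_\ell |A'_\ell|$ (since $A'_i\cap A'_j$ is nonempty when the body executes, at least one element is removed), and this quantity is a non-negative integer; hence the loop must halt after finitely many steps, at which point the $A'_\ell$ are pairwise disjoint. Combining these three observations gives both conclusions of the lemma.

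I do not expect a serious obstacle here; the only subtle point is recognizing that the two branches of the \textbf{if} statement together cover exactly the dichotomy produced by posimodularity, so some care is needed in the case analysis to match the algorithm's comparison to the posimodular inequality with the correct sign.
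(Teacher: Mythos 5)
Your argument is correct and follows essentially the same route as the paper's: derive posimodularity from symmetry plus submodularity, then observe that each uncrossing step's branch cannot increase the sum. You go into more detail than the paper (proving posimodularity, matching the if/else dichotomy carefully, and verifying termination and union invariance), but the underlying idea and structure are the same.
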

\begin{proof}
  In each uncrossing step we replace $A'_i$ and $A'_j$ either by
  $A'_i$ and $A'_j - A'_i$ or by $A'_i-A'_j$ and $A'_j$. Since $f$ is
  submodular and symmetric, $f$ is posi-modular; that is, for any two
  sets $X$ and $Y$, $f(X) + f(Y) \geq f(X - Y) + f(Y - X)$. Therefore,
  for any two sets $X$ and $Y$, $\min\{f(X - Y) + f(Y), f(X) + f(Y -
  X)\}$ is at most $f(X) + f(Y)$. Thus it follows by induction that
  $\sum_{i = 1}^k f(A'_i) \leq \sum_{i = 1}^k f(A(i, \theta))$ and
  $\cup_{i=1}^k A'_i = \cup_{i=1}^k A(i,\theta)$.
\end{proof}

\begin{lemma} \label{cor:sym-smp-unallocated}
	If $f$ is a symmetric submodular function,
		$$\Ex_{\theta \in [0, 1]}[f(U(\theta))] \leq {1 \over 2}
		\optcr.$$
\end{lemma}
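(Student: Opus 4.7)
The plan is to derive this bound as an almost immediate corollary of Corollary~\ref{cor:smp-main-simple}, using nothing beyond the symmetry of $f$. First I would write down the inequality that Corollary~\ref{cor:smp-main-simple} provides,
$$\optcr \;=\; \sum_{i=1}^k \hat{f}(\bx_i) \;\geq\; \int_0^1 f(A(\theta))\, d\theta \;+\; \int_0^1 f(U(\theta))\, d\theta,$$
which already says that the $\theta$-averaged cost of the allocated plus unallocated sets is at most the fractional optimum.

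Next I would exploit symmetry. By definition $U(\theta) = V - A(\theta)$ for every $\theta \in [0,1]$, and since $f$ is symmetric we have $f(U(\theta)) = f(V - A(\theta)) = f(A(\theta))$ as a pointwise identity in $\theta$. Substituting this into the displayed inequality above collapses the two integrals into one, giving
$$\optcr \;\geq\; 2 \int_0^1 f(U(\theta))\, d\theta \;=\; 2\, \Ex_{\theta \in [0,1]}[f(U(\theta))],$$
and dividing by $2$ yields the claim.

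The substance of the argument is really carried by Theorem~\ref{thm:smp-main}; once that is available, the only thing I need to be careful about is applying symmetry to the correct integrand, namely $f(A(\theta))$ rather than $\sum_i f(A(i,\theta))$ (one cannot in general swap each $A(i,\theta)$ with its complement without blowing up the cost). Apart from this check, the step is mechanical and loses no factors, so the constant $\tfrac{1}{2}$ is tight with respect to the bound supplied by the corollary.
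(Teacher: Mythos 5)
Your proof is correct and matches the paper's argument exactly: both set $\delta=1$ in Theorem~\ref{thm:smp-main} (equivalently, invoke Corollary~\ref{cor:smp-main-simple}) and then use symmetry of $f$ to identify $f(A(\theta))$ with $f(U(\theta))$ pointwise, collapsing the two integrals. Nothing to add.
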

\begin{proof}
	By setting $\delta = 1$ in Theorem~\ref{thm:smp-main}, we get
		$$\optcr \geq \int_0^1 f(V - U(\theta)) d\theta + \int_0^1
		f(U(\theta)) d\theta.$$
	Since $f$ is symmetric, $f(V-U(\theta)) = f(U(\theta)$ for all
        $\theta$ and hence,
		$$\optcr \geq 2 \int_0^1
		f(U(\theta)) d\theta = 2 \Ex_{\theta \in [0, 1]}[f(U(\theta))].$$
\end{proof}

\noindent
The random partition returned by the algorithm is
$(A'_1,\ldots,A'_{k-1},A'_k \cup U(\theta))$. A non-negative
submodular function is sub-additive, hence $f(A'_k \cup U(\theta)) \le
f(A'_k) + f(U(\theta))$. The expected cost of the partition is
\begin{align*}
	\sum_{i=1}^{k-1} \Ex[f(A'_i)] + \Ex[f(A'_k \cup U(\theta))] & \le
	\sum_{i=1}^k \Ex[f(A'_i)] + \Ex[f(U(\theta))] \\
	& \le \sum_{i=1}^k \Ex[f(A(i,\theta))] + \Ex[f(U(\theta))] \quad
	(\mbox{Using Lemma~\ref{lem:uncrossing}})\\
	& \le \optcr + \frac{1}{2} \optcr \quad (\mbox{Using
	Lemma~\ref{cor:sym-smp-unallocated}})\\
   	& = 1.5 \optcr.
\end{align*}
This finishes the proof of Theorem~\ref{thm:sym}. It is not hard to 
verify that the algorithm runs in polynomial time. One can easily
derandomize the algorithm as follows. The only randomness is in the
choice of $\theta$.  As $\theta$ ranges in the interval $[0,1]$, the
collection of sets $\{A(i,\theta) \mid 1 \le i \le k\}$ changes only
when $\theta$ crosses some $x(v_j,i)$ value. Thus there are at most $nk$
such distinct values. We can try each of them as a choice for $\theta$ 
and pick the least cost partition obtained among all the choices.

\medskip
\noindent
\textbf{Achieving a $(1.5-1/k)$-aproximation:} We can improve the
approximation to $1.5 - 1/k$ as follows. We relabel the terminals so
that $k = \argmax_{1 \leq i \leq k} \hat{f}(\bx_i)$.  We perform
$\theta$-rounding with respect to the first $k - 1$ terminals in order
to get the sets $A(i, \theta)$ for each $i \neq k$, and we let
$U(\theta) = V - \cup_{1 \leq i \leq k - 1} A(i, \theta)$.  We uncross
the sets $\{A(i, \theta) \mid 1\le i < k\}$ to get $k - 1$ disjoint
sets $A'_i$, and we return $(A'_1, \cdots, A'_{k - 1}, U(\theta))$. We
can prove a variant of Theorem~\ref{thm:smp-main} that shows that the
expected cost of $U(\theta)$ is at most $\optcr / 2$, even when
$U(\theta)$ is the set of all vertices that are unallocated when we
perform $\theta$-rounding with respect to only the first $k'$
terminals, for any $k' \leq k$. The proof of this extension of
Theorem~\ref{thm:smp-main} is notationally and technically messy (and
somewhat non-trivial), and we omit it in this version of the
paper. The total expected cost of the sets $A'_1, \cdots, A'_{k - 1}$
is at most $(1 - 1/k) \optcr$ (since we saved on $\hat{f}(\bx_k)$),
and the expected cost of $U(\theta)$ is at most $\optcr / 2$.

\section{Submodular Multiway Partition}
In this section we consider \SubMP when $f$ is an arbitrary non-negative
submodular function. We choose $\theta \in (1/2,1]$ to ensure that
the sets $\{A(i,\theta) \mid 1 \le i \le k\}$ are disjoint.
\begin{algo}
\underline{\textbf{\SubMPRH}}
\\\> let $\bx$ be a feasible solution to \SubMPRel
\\\> pick $\theta \in (1/2, 1]$ uniformly at random
\\\> for $i = 1$ to $k$
\\\>\> $A(i, \theta) \leftarrow \{v \;|\; x(v, i) \geq \theta\}$
\\\> $A(\theta) \leftarrow \bigcup_{1 \leq i \leq k} A(i,
\theta)$
\\\> $U(\theta) \leftarrow V - A(\theta)$
\\\> return $(A(1, \theta), \cdots, A(k-1, \theta), A(k,\theta) \cup
U(\theta))$
\end{algo}

\begin{proofof}{Theorem~\ref{thm:smp-general}}
	In the following, we will show that \SubMPRH achieves a
	$2$-approximation for \SubMP. As before, let $\optcr = \sum_{i =
	1}^k \hat{f}(\bx_i)$. Since $f$ is subadditive, the expected cost
	of the partition returned by \SubMPRH is
	\begin{align*}
		\Ex_{\theta \in (1/2, 1]}\Bigg[\sum_{i = 1}^{k - 1} f(A(i,
		\theta)) + f(A(k,\theta) \cup U(\theta))\Bigg] 
                  & \le \Ex_{\theta \in (1/2, 1]}\Bigg[\sum_{i = 1}^{k} f(A(i,
		\theta)) + f(U(\theta))\Bigg] \\
		&= 2 \left( \sum_{i = 1}^{k} \int_{1/2}^1 f(A(i, \theta))
		d\theta + \int_{1/2}^1 f(U(\theta)) d\theta\right)\\
		&= 2\left( \optcr 
		- \sum_{i = 1}^{k} \int_0^{1/2} f(A(i, \theta)) d\theta
		+ \int_{1/2}^1 f(U(\theta)) d\theta\right).
	\end{align*}
        To show that the expected cost is at most $2\optcr$ it
        suffices to show that $\sum_{i = 1}^{k} \int_0^{1/2} f(A(i,
        \theta)) d\theta \ge \int_{1/2}^1 f(U(\theta))
        d\theta$. Setting $\delta = 1/2$ in
        Theorem~\ref{thm:smp-main} we get
	\begin{align*}
		\sum_{i = 1}^{k} \int_0^{1/2} f(A(i, \theta)) d\theta  & \geq
		\int_0^{1/2} f(V - U(\theta)) d\theta + \int_0^1 f(U(\theta))
		d\theta\\
		&\geq \int_{1/2}^1 f(U(\theta)) d\theta \qquad\qquad
		\mbox{($f$ is non-negative)}
	\end{align*}
	Thus \SubMPRH achieves a randomized $2$-approximation for
        \SubMP.  The algorithm can be derandomized in the same fashion
        as the one for symmetric functions since there are at most $nk$
       values of $\theta$ (the $x(v_j,i)$ values) where the partition
       returned by the algorithm can change.
\end{proofof}

\noindent
\textbf{Improving the factor of $2$:} As we remarked earlier the {\sc
Vertex Cover} problem can be reduced in an approximation preserving
fashion to \SubMP, and hence it is unlikely that the factor of $2$ for
\SubMP can be improved. However, it may be possible to obtain a
$2(1-1/k)$-approximation.  A natural algorithm here is to do
half-rounding only with respect to the first $k-1$ terminals, where $k
= \argmax_i \hat{f}(\bx_i)$, and assign all the remaining elements to
$k$. We have so far been unable to strengthen
Theorem~\ref{thm:smp-main} to achieve the desired improvement.

\section{Proof of Main Theorem}
In this section we prove Theorem~\ref{thm:smp-main}, our main
technical result.  We recall some relevant definitions.  Let $\bx$ be
a solution to \SubMPRel. We are interested in analyzing
$\theta$-rounding when $\theta$ is chosen uniformly at random from an
interval $[1-\delta,1]$ for some $\delta \ge 0$. For a label $i$ let
$A(i, \theta) = \{v \in V \sp x(v, i) \geq \theta\}$ be the set of all
vertices that are assigned/allocated to $i$ for some fixed $\theta$.
Note that for distinct labels $i,i'$ the sets $A(i,\theta)$ and
$A(i',\theta)$ may not be disjoint if $\theta \le 1/2$, although they
are disjoint if $\theta > 1/2$.  Let $A(\theta) = \bigcup_{1 \leq
i \leq k} A(i, \theta)$ be the set of all vertices that are allocated
to the terminals when $\theta$ is the chosen threshold. We let
$U(\theta) = V - A(\theta)$ denote the set of unallocated vertices.
With this notation in place we restate Theorem~\ref{thm:smp-main}.

\begin{theorem} 
	For any $\delta \in [1/2, 1]$, we have
	$$\sum_{i = 1}^k \int_0^{\delta} f(A(i, \theta) d\theta \geq
	(k\delta - \delta - 1) f(\emptyset) + \int_0^{\delta}
	f(A(\theta))d\theta + \int_0^1 f(U(\theta))d\theta .$$
\end{theorem}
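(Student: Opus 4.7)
The plan is to recast the inequality in terms of \lovasz extensions, apply a rearrangement inequality coming from iterated two-variable submodularity, and finish with a sublinearity bound that recovers the $f(\emptyset)$ slack.

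First I would introduce the truncated vectors $\mathbf{z}_i$ given by $z_i(v) = \min(x(v,i), \delta)$, together with $\mathbf{m}(v) = \max_i x(v,i)$, its truncation $\mathbf{m}_\delta = \mathbf{m} \wedge \delta\mathbf{1}$, and $\mathbf{u} = \mathbf{1} - \mathbf{m}$. A direct computation from the threshold definition of $\hat f$ yields
\[
\hat f(\mathbf{z}_i) = \int_0^\delta f(A(i,\theta))\,d\theta + (1-\delta)f(\emptyset), \qquad \hat f(\mathbf{m}_\delta) = \int_0^\delta f(A(\theta))\,d\theta + (1-\delta)f(\emptyset),
\]
and, via the substitution $\theta \mapsto 1-\theta$, $\hat f(\mathbf{u}) = \int_0^1 f(U(\theta))\,d\theta$. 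Plugging these into the theorem and collecting the $f(\emptyset)$ terms reduces the claim to
\[
\sum_{i=1}^k \hat f(\mathbf{z}_i) \;\geq\; \hat f(\mathbf{m}_\delta) + \hat f(\mathbf{u}) + (k-2)f(\emptyset).
\]

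For the main step, let $\mathbf{w}_j$ be the vector whose $v$-th coordinate is the $j$-th largest value among $z_1(v),\ldots,z_k(v)$. Submodularity of $f$ applied pointwise to threshold sets, integrated in $\theta$, gives the two-variable inequality
\[
\hat f(\mathbf{a}) + \hat f(\mathbf{b}) \;\geq\; \hat f(\mathbf{a}\vee\mathbf{b}) + \hat f(\mathbf{a}\wedge\mathbf{b}) \qquad \forall\, \mathbf{a},\mathbf{b}\in[0,1]^V,
\]
where $\vee,\wedge$ are coordinate-wise max/min. Iterating this comparator along any sorting network for $k$ inputs transforms $(\mathbf{z}_1,\ldots,\mathbf{z}_k)$ into $(\mathbf{w}_1,\ldots,\mathbf{w}_k)$ and yields $\sum_{i=1}^k \hat f(\mathbf{z}_i) \geq \sum_{j=1}^k \hat f(\mathbf{w}_j)$. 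The hypothesis $\delta \geq 1/2$ is essential here: since $\sum_i x(v,i)=1$, at most one coordinate $x(v,i)$ can exceed $\delta$, so truncation affects only the largest entry at each $v$. Hence $\mathbf{w}_1 = \mathbf{m}_\delta$, while for $j \geq 2$ the entry $w_j(v)$ is the $j$-th largest of $x(v,\cdot)$ with no truncation, and $\sum_{j \geq 2} \mathbf{w}_j = \mathbf{u}$.

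To recover the $(k-2)f(\emptyset)$, I would use that $\hat f(c\mathbf{y}) = c\,\hat f(\mathbf{y}) + (1-c)f(\emptyset)$ for $c \in [0,1]$, so $g := \hat f - f(\emptyset)$ is positively $1$-homogeneous; together with convexity of $\hat f$, this makes $g$ sublinear. Applied to the $k-1$ vectors $\mathbf{w}_2,\ldots,\mathbf{w}_k$, this yields $\sum_{j=2}^k \hat f(\mathbf{w}_j) \geq \hat f\!\left(\sum_{j \geq 2} \mathbf{w}_j\right) + (k-2)f(\emptyset) = \hat f(\mathbf{u}) + (k-2)f(\emptyset)$. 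Combining with the sorting step and peeling off the $\mathbf{w}_1 = \mathbf{m}_\delta$ term proves the reformulated inequality. The hard part will be spotting the reformulation: once one sees that truncating at $\delta$ converts all three integrals simultaneously into \lovasz extensions and that coordinate-wise sorting cleanly separates the $\mathbf{m}_\delta$ piece from the $\mathbf{u}$ piece, the remaining two steps follow from standard properties of $\hat f$, and the role of $\delta \geq 1/2$ is precisely to ensure that the identifications $\mathbf{w}_1 = \mathbf{m}_\delta$ and $\sum_{j\geq 2} \mathbf{w}_j = \mathbf{u}$ hold simultaneously.
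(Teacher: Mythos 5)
Your proposal is correct, and it takes a genuinely different route from the paper's proof. The paper orders the vertices $v_1,\dots,v_n$ by $\alpha_j=\max_i x(v_j,i)$, tracks the partial sums $\rho_j=\sum_i\int_0^\delta f(A_j(i,\theta))\,d\theta$ restricted to $V_j=\{v_1,\dots,v_j\}$, and bounds $\rho_j-\rho_{j-1}$ by two separate lemmas (one local bound for the non-maximal labels, one telescoping bound over $j$ for the maximal label $\ell_j$). Your argument instead lifts everything to the level of Lov\'asz extensions: the identities $\hat f(\mathbf{z}_i)=\int_0^\delta f(A(i,\theta))\,d\theta+(1-\delta)f(\emptyset)$, $\hat f(\mathbf{m}_\delta)=\int_0^\delta f(A(\theta))\,d\theta+(1-\delta)f(\emptyset)$, and $\hat f(\mathbf{u})=\int_0^1 f(U(\theta))\,d\theta$ correctly reduce the theorem to $\sum_i\hat f(\mathbf{z}_i)\ge\hat f(\mathbf{m}_\delta)+\hat f(\mathbf{u})+(k-2)f(\emptyset)$; the extension-level inequality $\hat f(\mathbf{a})+\hat f(\mathbf{b})\ge\hat f(\mathbf{a}\vee\mathbf{b})+\hat f(\mathbf{a}\wedge\mathbf{b})$ (proved by integrating submodularity over level sets), combined with a sorting network, replaces the sum over the $\mathbf{z}_i$ by a sum over the coordinate-wise order statistics $\mathbf{w}_j$; and $\delta\ge1/2$ is used exactly where the paper uses it — to ensure that truncation at $\delta$ only touches the largest entry at each vertex, so that $\mathbf{w}_1=\mathbf{m}_\delta$ and $\sum_{j\ge2}\mathbf{w}_j=\mathbf{u}$. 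The sublinearity step ($\hat f(c\mathbf{y})=c\hat f(\mathbf{y})+(1-c)f(\emptyset)$ for $c\in[0,1]$, hence $g=\hat f-f(\emptyset)$ is convex, vanishes at $\mathbf{0}$, and is positively homogeneous, hence sublinear on $[0,1]^V$) correctly delivers $\sum_{j\ge2}\hat f(\mathbf{w}_j)\ge\hat f(\mathbf{u})+(k-2)f(\emptyset)$. What your approach buys is a cleaner, essentially computation-free argument that isolates the three general-purpose tools (the lattice inequality for $\hat f$, sorting networks, and sublinearity), making the role of $\delta\ge1/2$ and the provenance of the $f(\emptyset)$ correction much more transparent than in the paper's telescoping induction. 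The paper's approach, in exchange, is more hands-on and adapts more readily to the finer vertex-by-vertex variants mentioned in the text (e.g.\ the stronger version used to shave $1/k$ off the symmetric ratio), where the truncation-and-sort picture is less immediate.
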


The proof of the above theorem is somewhat long and technical. At a
high-level it is based on induction on the number of vertices with a
particular ordering that we discuss now. In the following, we use $i$
to index over the labels, and we use $j$ to index over the vertices.
For vertex $v_j$, let $\alpha_j = \max_i x(v_j,i)$ be the maximum
amount to which $\bx$ assigns $v_j$ to a label. We relabel the
vertices such that $0 \le \alpha_1 \le \alpha_2 \ldots \le \alpha_n
\le 1$.  For notational convenience we let $\alpha_0 = 0$ and
$\alpha_{n+1} = 1$.  Further, for each vertex $v_j$ we let $\ell_j$ be
a label such that $\alpha_j = x(v_j,\ell_j)$; note that $\ell_j$ is
not necessarily unique unless $\alpha_j > 1/2$.\footnote{An alert
reader may notice that we do not distinguish between terminals and
non-terminals. In fact the theorem statement does not rely on the fact
that terminals are assigned fully to their respective labels. The only
place we use the fact that $x(s_i, i) = 1$ for each $i$ is to show
that $\theta$-rounding based algorithms produce a valid multiway
partition with respect to the terminals.}

We observe that in $\theta$-rounding, $v_j$ is allocated to a terminal
(that is $v_j \in A(\theta)$) if and only if $\theta \le \alpha_j$,
otherwise $v_j \in U(\theta)$ and thus it is unallocated. It follows
from our ordering that $U(\theta) = \{v_1,v_2, \ldots, v_{j-1}\}$ iff
$\theta \in (\alpha_{j-1},\alpha_j]$ and in this case $A(\theta) = V-
U(\theta) = \{v_j,\ldots,v_n\}$. Thus, prefixes of the ordering given
by the $\alpha$ values are the only interesting sets to consider when
analyzing the rounding process from the point of view of allocated and
unallocated vertices. To help with notation, for $1 \le j \le n$ we
let $V_j = \{v_1,v_2,\ldots,v_j\}$ and $V_0 = \emptyset$.  The
following proposition captures this discussion.

\begin{prop} \label{prop:allocated-helper}
	Let $\alpha_0 = 0$ and let $j$ be any index such that $1 \leq j
	\leq n$. For any $\theta \in (\alpha_{j - 1}, \alpha_j]$,
	$A(\theta) = V - V_{j - 1}$ and $U(\theta) = V_{j - 1}$.
\end{prop}

\noindent
It helps to rewrite the expected cost of $f(A(\theta))$ and
$f(U(\theta))$ under $\theta$-rounding in a more convenient form given
below.

\begin{prop} \label{prop:allocated}
	Let $r \in [0, 1]$, and let $h$ be the largest value of $j$ such
	that $\alpha_j \leq r$. We have
		$$\int_0^r f(A(\theta)) d\theta = \sum_{j = 1}^h
		\alpha_j (f(V - V_{j - 1}) - f(V - V_j)) + r f(V - V_h)$$
	and
		$$\int_0^r f(U(\theta)) d\theta = \sum_{j = 1}^h
		\alpha_j (f(V_{j - 1}) - f(V_j)) + r f(V_h).$$
\end{prop}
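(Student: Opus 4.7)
The plan is to prove both identities by chopping the interval $[0,r]$ at the breakpoints $\alpha_0 < \alpha_1 \le \cdots \le \alpha_h \le r$, applying Proposition~\ref{prop:allocated-helper} on each subinterval (where $A(\theta)$ and $U(\theta)$ are constant), and then rearranging via summation by parts to obtain the telescoped form in the statement. Since $A(\theta)$ and $U(\theta)$ depend on $\theta$ only through the prefix $V_{j-1}$ determined by which subinterval $(\alpha_{j-1},\alpha_j]$ contains $\theta$, the integrals reduce to finite sums.

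Concretely, I would first write
\[
\int_0^r f(A(\theta))\,d\theta \;=\; \sum_{j=1}^{h} \int_{\alpha_{j-1}}^{\alpha_j} f(A(\theta))\,d\theta \;+\; \int_{\alpha_h}^{r} f(A(\theta))\,d\theta,
\]
and then invoke Proposition~\ref{prop:allocated-helper} to conclude that on each piece $(\alpha_{j-1},\alpha_j]$ we have $A(\theta)=V-V_{j-1}$, so the $j$-th integral equals $(\alpha_j-\alpha_{j-1})\,f(V-V_{j-1})$. Similarly the tail integral on $(\alpha_h,r]$ equals $(r-\alpha_h)\,f(V-V_h)$ (this is where we use the definition of $h$ as the largest index with $\alpha_h \le r$).

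Next I would perform a summation-by-parts manipulation. Writing $g_j := f(V-V_j)$ for brevity and using $\alpha_0=0$,
\[
\sum_{j=1}^{h}(\alpha_j-\alpha_{j-1})\,g_{j-1} \;=\; \sum_{j=1}^{h}\alpha_j g_{j-1} - \sum_{j=1}^{h-1}\alpha_j g_j,
\]
so adding the tail term $(r-\alpha_h)g_h$ and re-indexing gives exactly $\sum_{j=1}^{h}\alpha_j(g_{j-1}-g_j) + r\,g_h$, which is the right-hand side of the claimed identity. The proof of the second identity is identical, with $V-V_{j-1}$ replaced by $V_{j-1}$ throughout, since $U(\theta)=V_{j-1}$ on $(\alpha_{j-1},\alpha_j]$.

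There is no real obstacle here; the only mildly delicate point is bookkeeping at the endpoints of the subintervals (and in particular handling the possibility that consecutive $\alpha_j$'s coincide, in which case the corresponding subinterval is empty and contributes zero, which is consistent with both sides of the formula) and verifying that the tail interval $(\alpha_h,r]$ is treated correctly in the edge case $r=\alpha_h$, where both the tail integral and the term $r\,f(V-V_h)-\alpha_h f(V-V_h)$ vanish appropriately. Everything else is a routine computation.
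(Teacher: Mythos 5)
Your proof is correct and follows exactly the same route as the paper: split $[0,r]$ at the breakpoints $\alpha_1,\ldots,\alpha_h$ plus the tail $(\alpha_h,r]$, evaluate each piece using Proposition~\ref{prop:allocated-helper}, and telescope via summation by parts (you simply spell out the rearrangement that the paper does in one line). Nothing further is needed.
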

\begin{proof}
   Recall from Proposition~\ref{prop:allocated-helper} 
   that $A(\theta) = V-V_{j-1}$ when $\theta \in (\alpha_{j-1},\alpha_j]$.
   Therefore,
   \begin{eqnarray*}
     \int_0^r f(A(\theta)) d\theta &=& \sum_{j = 1}^{h}
     \int_{\alpha_{j - 1}}^{\alpha_j} f(A(\theta)) d\theta + \int_{\alpha_h}^r f(A(\theta)d\theta\\
     &=& \sum_{j = 1}^{h} (\alpha_j - \alpha_{j - 1}) f(V -
     V_{j - 1}) + (r-\alpha_h) f(V - V_h) \\
     &=&  r f(V - V_h) + \sum_{j = 1}^h \alpha_j (f(V - V_{j - 1}) - f(V - V_j)).
   \end{eqnarray*}
   The second identity follows from a very similar argument.
\end{proof}

\paragraph{The inductive approach:} Recall that numbering the vertices
in increasing order of their $\alpha$ values ensures that $U(\theta)$
is $V_j$ for some $0 \le j \le n$.  Let $\bx_j$ be the restriction of
$\bx$ to $V_j$. Note that $\bx_j$ gives a feasible allocation of $V_j$
to the $k$ labels although it does not necessarily correspond to a
multiway partition with respect to the original terminals. Also, note
that the function $f$ when restricted to $V_j$ is still submodular but
may not be symmetric even if $f$ is. In order to argue about $\bx_j$
we introduce additional notation. Let $A_j(i, \theta) = A(i, \theta)
\cap V_j$, $A_j(\theta) = A(\theta) \cap V_j$, and $U_j(\theta) =
U(\theta) \cap V_j$. In other words $A_j(\theta)$ and $U_j(\theta)$
are the allocated and unallocated sets if we did $\theta$-rounding
with respect to $\bx_j$ that is defined over $V_j$.

Let $\rho_j = \sum_{i = 1}^k \int_0^{\delta} f(A_j(i,
\theta))d\theta$; we have $\rho_0 = k \delta f(\emptyset)$.  Note that
the left hand side of the inequality in Theorem~\ref{thm:smp-main} is
$\rho_n = \sum_{i = 1}^k \int_0^{\delta} f(A_n(i, \theta))d\theta$,
since $A_n(i,\theta) = A(i,\theta)$. To understand $\rho_n$ we
consider the quantity $\rho_j - \rho_{j-1}$ which is easier since
$\rho_j$ and $\rho_{j-1}$ differ only in $v_j$.  Recall that $\ell_j$
is a label such that $\alpha_j = \max_{i=1}^k x(v_j,i)$. The
importance of $\ell_j$ is that if $v_j$ is allocated then it is
allocated to $\ell_j$ (and possibly to other labels as well).  We
express $\rho_j - \rho_{j-1}$ as the sum of two quantities with the
term for $\ell_j$ separated out.

\begin{prop}
  \label{prop:inductive}
  $$\rho_j - \rho_{j-1} = \int_{0}^\delta \left(f(A_j(\ell_j,\theta))
  - f(A_{j-1}(\ell_j,\theta)\right)d\theta + \sum_{i \neq \ell_j}
  \int_{0}^\delta \left(f(A_j(i,\theta)) -
  f(A_{j-1}(i,\theta)\right)d\theta.$$
\end{prop}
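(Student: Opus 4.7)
The plan is that Proposition~\ref{prop:inductive} is essentially a bookkeeping identity, so I would treat it as an immediate consequence of the definition of $\rho_j$ together with linearity of the integral and splitting one term out of a finite sum. Specifically, from $\rho_j = \sum_{i=1}^k \int_0^\delta f(A_j(i,\theta))\,d\theta$ and the analogous expression for $\rho_{j-1}$, I would subtract and pull the difference inside the integral and the outer sum. This produces $\rho_j - \rho_{j-1} = \sum_{i=1}^k \int_0^\delta \bigl(f(A_j(i,\theta)) - f(A_{j-1}(i,\theta))\bigr)\,d\theta$, and then I would simply separate the single summand indexed by $i=\ell_j$ from the remaining $k-1$ summands to obtain the claimed decomposition.

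The only non-formal content worth noting is \emph{why} the term $i=\ell_j$ deserves to be isolated, since this is what justifies the form in which the proposition is stated (and is what makes the identity useful downstream). By definition $\ell_j$ is a label at which $\alpha_j = x(v_j,\ell_j)$ is attained, so $v_j \in A(\ell_j,\theta)$ exactly when $\theta \le \alpha_j$, and this is the unique index (at least when $\alpha_j > 1/2$) for which the added vertex $v_j$ is guaranteed to enter $A_j(\cdot,\theta)$ on the full interval $(0,\alpha_j]$. For every $i \neq \ell_j$ the symmetric difference $A_j(i,\theta) \setminus A_{j-1}(i,\theta)$ is either empty or equal to $\{v_j\}$, but it is active only on the smaller sub-interval $(0, x(v_j,i)]$, which is contained in $(0,\alpha_j]$. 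Flagging $\ell_j$ separately lets the later induction treat the "large" and "small" marginal contributions of $v_j$ differently.

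Since the statement is a direct rewriting, I do not anticipate any obstacle: there is no submodularity, no symmetry, no choice of $\delta$, and no use of the ordering of the $\alpha_j$ beyond the existence of $\ell_j$; everything needed is already built into the definitions of $\rho_j$, $A_j(i,\theta)$, and $\ell_j$ given just before the proposition. The one tiny care-point is that when $\alpha_j \le 1/2$ the index $\ell_j$ is not unique; but the identity is insensitive to the particular choice of $\ell_j$ because only the value $f(A_j(\ell_j,\theta)) - f(A_{j-1}(\ell_j,\theta))$ enters, and any valid choice of $\ell_j$ yields the same algebraic split. Hence the proof of Proposition~\ref{prop:inductive} is essentially one line of rewriting, and its real purpose is to set up the telescoping argument $\rho_n - \rho_0 = \sum_{j=1}^n (\rho_j - \rho_{j-1})$ that will carry the induction toward Theorem~\ref{thm:smp-main}.
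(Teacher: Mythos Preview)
Your proposal is correct and matches the paper's treatment: the paper states Proposition~\ref{prop:inductive} without proof, since it is immediate from the definition $\rho_j = \sum_{i=1}^k \int_0^{\delta} f(A_j(i,\theta))\,d\theta$ by subtracting and separating out the $i=\ell_j$ summand. Your additional commentary on why $\ell_j$ is singled out is accurate motivation but not part of the (trivial) proof itself.
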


\noindent
We prove the following two key lemmas by using of submodularity of $f$
appropriately.

\begin{lemma}
  \label{lem:charging1}
  For any $\delta \in [1/2,1]$ and for any $j$ such that $1 \le j \le n$,
  $$\sum_{i \neq \ell_j}  \int_{0}^\delta (f(A_j(i,\theta)) -
  f(A_{j-1}(i,\theta))d\theta \ge f(V_j) - f(V_{j-1}) + \alpha_j
  (f(V_{j-1}) - f(V_j)).$$
\end{lemma}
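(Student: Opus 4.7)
The plan is to apply submodularity vertex-by-vertex to isolate the marginal contribution of $v_j$. For each label $i \neq \ell_j$, the sets $A_j(i,\theta)$ and $A_{j-1}(i,\theta)$ differ by at most the element $v_j$: they coincide when $\theta > x(v_j,i)$, while $A_j(i,\theta) = A_{j-1}(i,\theta) \cup \{v_j\}$ when $\theta \le x(v_j,i)$ (using that $v_j \notin V_{j-1}$). Consequently,
$\int_0^\delta \bigl(f(A_j(i,\theta)) - f(A_{j-1}(i,\theta))\bigr)\,d\theta = \int_0^{\min(\delta,\, x(v_j,i))} \bigl(f(A_{j-1}(i,\theta) \cup \{v_j\}) - f(A_{j-1}(i,\theta))\bigr)\,d\theta.$

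Write $m_j := f(V_j) - f(V_{j-1})$. Since $A_{j-1}(i,\theta) \subseteq V_{j-1}$, submodularity of $f$ gives the pointwise inequality $f(A_{j-1}(i,\theta) \cup \{v_j\}) - f(A_{j-1}(i,\theta)) \ge m_j$. This bound is valid irrespective of the sign of $m_j$, which matters since $f$ need not be monotone. Integrating yields, for each $i \neq \ell_j$, $\int_0^\delta \bigl(f(A_j(i,\theta)) - f(A_{j-1}(i,\theta))\bigr)\,d\theta \ge \min(\delta, x(v_j,i)) \cdot m_j.$

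It remains to sum over $i \neq \ell_j$, and this is where the hypothesis $\delta \ge 1/2$ enters crucially. For any $i \neq \ell_j$, two bounds on $x(v_j,i)$ are available: $x(v_j,i) \le \alpha_j$ by definition of $\alpha_j = \max_{i'} x(v_j,i')$, and $x(v_j,i) \le \sum_{i' \neq \ell_j} x(v_j,i') = 1 - \alpha_j$ by feasibility. Hence $x(v_j,i) \le \min(\alpha_j, 1-\alpha_j) \le 1/2 \le \delta$, so $\min(\delta, x(v_j,i)) = x(v_j,i)$. Summing and using $\sum_{i \neq \ell_j} x(v_j,i) = 1 - \alpha_j$, I obtain $\sum_{i \neq \ell_j} \int_0^\delta \bigl(f(A_j(i,\theta)) - f(A_{j-1}(i,\theta))\bigr)\,d\theta \ge (1-\alpha_j) m_j$, and rearranging gives exactly the claimed right-hand side $f(V_j) - f(V_{j-1}) + \alpha_j\bigl(f(V_{j-1}) - f(V_j)\bigr)$.

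The main subtlety—rather than a deep obstacle—is sign bookkeeping: since $m_j$ can be negative when $f$ is non-monotone, every inequality must remain valid without assuming its sign. Submodularity delivers the correct direction in both cases, and the minimum-collapse step is unaffected. The other point to watch is that the label $\ell_j$ must be separated out, because $x(v_j,\ell_j) = \alpha_j$ can exceed $\delta$ (e.g., when $v_j$ is a terminal), which would spoil the bound $\min(\delta, x(v_j,i)) = x(v_j,i)$; the statement of the lemma is already tailored to exclude exactly that term.
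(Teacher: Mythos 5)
Your proof is correct and follows essentially the same route as the paper's: isolate $\ell_j$, reduce the integral to $[0,\min(\delta, x(v_j,i))]$, apply the submodular marginal bound against $V_{j-1}$, observe $x(v_j,i)\le\delta$ for $i\neq\ell_j$, and sum. The only cosmetic difference is that you establish $x(v_j,i)\le\delta$ in one line via $x(v_j,i)\le\min(\alpha_j,1-\alpha_j)\le 1/2\le\delta$, whereas the paper splits into the cases $\alpha_j\le\delta$ and $\alpha_j>\delta$; your version is slightly cleaner but the idea is identical.
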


\noindent
Summing the left hand side in the above lemma over all $j$ and
applying Proposition~\ref{prop:allocated} with $r=1$ we obtain:
\begin{corollary}
  \label{cor:charging1}
 $$\sum_{j=1}^n \left(\sum_{i \neq \ell_j}  \int_{0}^\delta
 (f(A_j(i,\theta)) - f(A_{j-1}(i,\theta))d\theta\right) \ge \int_0^1
 f(U(\theta))d\theta  - f(\emptyset).$$
\end{corollary}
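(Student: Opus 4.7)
The plan is to obtain the corollary by summing the inequality of Lemma~\ref{lem:charging1} over $j = 1, \ldots, n$ and recognizing the resulting right-hand side as $\int_0^1 f(U(\theta))\,d\theta - f(\emptyset)$ via Proposition~\ref{prop:allocated}. Summing the left-hand sides of the lemma over $j$ reproduces the left-hand side of the corollary verbatim, so no work is required on that side.

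On the right, summing gives
$$\sum_{j=1}^n \bigl(f(V_j) - f(V_{j-1})\bigr) \;+\; \sum_{j=1}^n \alpha_j \bigl(f(V_{j-1}) - f(V_j)\bigr).$$
The first sum is telescoping and collapses to $f(V_n) - f(V_0) = f(V) - f(\emptyset)$, using that our ordering of the vertices gives $V_n = V$ and $V_0 = \emptyset$. To rewrite the second sum, I would invoke Proposition~\ref{prop:allocated} with $r = 1$; since every $\alpha_j \le 1$, the largest index $h$ with $\alpha_h \le r$ is $h = n$, so the proposition yields
$$\int_0^1 f(U(\theta))\,d\theta \;=\; \sum_{j=1}^n \alpha_j \bigl(f(V_{j-1}) - f(V_j)\bigr) \;+\; f(V_n).$$
Substituting $f(V_n) = f(V)$ and combining with the telescoped piece produces exactly $\int_0^1 f(U(\theta))\,d\theta - f(\emptyset)$, which is the desired bound.

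I do not expect any real obstacle: this is a mechanical calculation once Lemma~\ref{lem:charging1} and Proposition~\ref{prop:allocated} are in hand. The only mild point of care is the boundary behavior of Proposition~\ref{prop:allocated} at $r = 1$, namely verifying that the residual term $r\cdot f(V_h)$ cleanly becomes $f(V)$ so that the telescoping contribution and the integral identity align without leftover terms. All of the actual combinatorial and submodular content has already been absorbed into Lemma~\ref{lem:charging1}, so the corollary reduces to bookkeeping.
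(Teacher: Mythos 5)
Your proposal is correct and matches the paper's approach exactly: the paper itself states that the corollary follows by summing Lemma~\ref{lem:charging1} over all $j$ and applying Proposition~\ref{prop:allocated} with $r=1$, which is precisely the telescoping and substitution you carry out.
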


\medskip
\noindent
Our second key lemma below is the more involved one. Unlike the first
lemma above we do not have a clean and easy expression for a single
term $\int_{0}^\delta \left(f(A_j(\ell_j,\theta)) -
  f(A_{j-1}(\ell_j,\theta)\right)d\theta$ but the sum over all $j$
gives a nice telescoping sum that results in the bound below.

\begin{lemma}
  \label{lem:charging2}
  Let $\delta \in [0,1]$ and let $h$ be the largest value of $j$
  such that $\alpha_j \le \delta$.
  $$\sum_{j=1}^n  \left (\int_{0}^\delta (f(A_j(\ell_j,\theta)) -
  f(A_{j-1}(\ell_j,\theta))d\theta\right) \ge \int_0^\delta
  f(A(\theta)) d\theta -  \delta f(\emptyset).$$
\end{lemma}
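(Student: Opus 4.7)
The plan is to prove, by induction on $j \in \{0,1,\ldots,n\}$, the stronger inductive invariant
\begin{equation*}
  \sum_{j'=1}^{j}\int_0^{\delta}\!\bigl(f(A_{j'}(\ell_{j'},\theta))-f(A_{j'-1}(\ell_{j'},\theta))\bigr)\,d\theta \;\ge\; \int_0^{\delta} f(A_j(\theta))\,d\theta - \delta\,f(\emptyset),
\end{equation*}
which at $j=n$ reproduces the lemma since $A_n(\theta)=A(\theta)$. The base case $j=0$ holds with equality: the left sum is empty and $A_0(\theta)=\emptyset$, so both sides are $0$.

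For the inductive step I would establish the pointwise bound
\begin{equation*}
  f(A_j(\ell_j,\theta))-f(A_{j-1}(\ell_j,\theta)) \;\ge\; f(A_j(\theta))-f(A_{j-1}(\theta)) \qquad \forall\,\theta\in[0,\delta],
\end{equation*}
and then integrate and add to the inductive hypothesis. Both sides vanish when $\theta>\alpha_j$, because in that regime $v_j\notin A(\ell_j,\theta)$ and $v_j\notin A(\theta)$, so $A_j(\ell_j,\theta)=A_{j-1}(\ell_j,\theta)$ and $A_j(\theta)=A_{j-1}(\theta)$. When $\theta\le\alpha_j$, vertex $v_j$ is added to both sets, and the inequality becomes the diminishing-marginals statement
\begin{equation*}
  f(A_{j-1}(\ell_j,\theta)\cup\{v_j\})-f(A_{j-1}(\ell_j,\theta)) \;\ge\; f(A_{j-1}(\theta)\cup\{v_j\})-f(A_{j-1}(\theta)),
\end{equation*}
which is precisely submodularity applied to the containment $A_{j-1}(\ell_j,\theta)\subseteq A_{j-1}(\theta)$; this containment is obvious from $A_{j-1}(\theta)=\bigcup_{i}A_{j-1}(i,\theta)$, and $v_j\notin A_{j-1}(\theta)$ holds since $A_{j-1}(\theta)\subseteq V_{j-1}$ while $v_j\notin V_{j-1}$.

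The main obstacle, in my view, is identifying the right inductive invariant. A direct attempt to bound each individual integral $\int_0^{\delta}(f(A_j(\ell_j,\theta))-f(A_{j-1}(\ell_j,\theta)))\,d\theta$ in isolation, for instance by comparing the marginal of $v_j$ at $A_{j-1}(\ell_j,\theta)$ with its marginal at the global set $V-V_j$ to match terms of Proposition~\ref{prop:allocated}, would apply submodularity in the wrong direction: the marginal at the small set $A_{j-1}(\ell_j,\theta)$ is at most (not at least) the marginal at $V-V_j$. The key idea is to replace the target $f(A(\theta))$ by its prefix $f(A_j(\theta))$ on the right-hand side and march both sides forward together, so that at each step only the local monotonicity of marginals with respect to the inclusion $A_{j-1}(\ell_j,\theta)\subseteq A_{j-1}(\theta)$ is needed, and submodularity points the correct way. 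Once this invariant is set up, the induction is essentially two lines.
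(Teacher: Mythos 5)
Your proof is correct, and it takes a genuinely cleaner route than the paper's. Both arguments hinge on the same submodularity observation: when $\theta\le\alpha_j$, the sets $A_j(\ell_j,\theta)$ and $A_j(\theta)$ are each obtained from their $(j-1)$-versions by adding $v_j$, and $A_{j-1}(\ell_j,\theta)\subseteq A_{j-1}(\theta)$ with $v_j\notin A_{j-1}(\theta)$, so diminishing marginals give
$f(A_j(\ell_j,\theta))-f(A_{j-1}(\ell_j,\theta))\ge f(A_j(\theta))-f(A_{j-1}(\theta))$.
This is exactly the content of the paper's Proposition~\ref{prop:charging2}, phrased on the right-hand side as $f(A_{j-1}(\theta)+v_j)-f(A_{j-1}(\theta))$ rather than as $f(A_j(\theta))-f(A_{j-1}(\theta))$. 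The difference is what you do next. The paper leaves the bound in the form $\int_0^{\min(\alpha_j,\delta)}(f(A_{j-1}(\theta)+v_j)-f(A_{j-1}(\theta)))\,d\theta$, introduces the quantities $\Delta_j$ and $\Lambda_j$, computes each of them explicitly as a sum over the breakpoints $\alpha_{j'}$ using the sets $V_{j',j}$, and then sums over $j$ to obtain Corollaries~\ref{cor:Delta} and~\ref{cor:Lambda}, which together with Proposition~\ref{prop:allocated} yield the lemma. You instead keep everything inside the integral and notice that, after rewriting the right-hand side as $f(A_j(\theta))-f(A_{j-1}(\theta))$, the sum over $j$ telescopes to $f(A_n(\theta))-f(A_0(\theta))=f(A(\theta))-f(\emptyset)$ for every fixed $\theta$; integrating over $[0,\delta]$ then gives exactly the right-hand side of the lemma. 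This avoids the explicit bookkeeping with $\Delta_j$, $\Lambda_j$, the $V_{j',j}$ sets, and the two appendix corollaries entirely. You frame it as induction on $j$, but the telescoping is so direct that induction is not really needed: summing the pointwise integrated inequalities over $j=1,\ldots,n$ and telescoping the right-hand side is a complete argument. This is a real simplification of the paper's presentation and is worth noting.
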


\noindent
The proofs of the above lemmas are given in
Sections~\ref{sec:charging1} and \ref{sec:charging2} respectively. We
now finish the proof of Theorem~\ref{thm:smp-main} assuming the above
two lemmas.

\begin{proofof}{Theorem~\ref{thm:smp-main}}
  Let $h$ be the largest value of $j$ such that $\alpha_j \leq
  \delta$.  From Proposition~\ref{prop:inductive} we have
  \begin{align*}
	\sum_{i=1}^k & \int_0^\delta f(A(i,\theta))d\theta  = \rho_n 
	= \rho_0 + \sum_{j=1}^n (\rho_j - \rho_{j-1}) \\
	&= \rho_0 + \sum_{j=1}^n \left( \int_{0}^\delta
	(f(A_j(\ell_j,\theta)) - f(A_{j-1}(\ell_j,\theta))d\theta +
	\sum_{i \neq \ell_j}  \int_{0}^\delta (f(A_j(i,\theta)) -
	f(A_{j-1}(i,\theta))d\theta \right)\\
	& \hspace{5in}(\mbox{Use Proposition~\ref{prop:inductive}})\\
	&\ge \rho_0 +  \int_0^\delta f(A(\theta)) d\theta -  \delta
	f(\emptyset)  + \sum_{j=1}^n \left(\sum_{i \neq \ell_j}
	\int_{0}^\delta (f(A_j(i,\theta)) - f(A_{j-1}(i,\theta)) \right)
	\quad (\mbox{Use Lemma~\ref{lem:charging2}})\\
	&\ge \rho_0 +  \int_0^\delta f(A(\theta)) d\theta -  \delta
	f(\emptyset) + \int_0^1 f(U(\theta)) d\theta - f(\emptyset) \quad
	(\mbox{Use Corollary~\ref{cor:charging1}})\\
	&\ge (k\delta - \delta - 1) f(\emptyset) + \int_0^\delta f(A(\theta))
	d\theta + \int_0^1 f(U(\theta)) d\theta.
  \end{align*}
  We used $\rho_0 = \delta k f(\emptyset)$ in the final inequality.
\end{proofof}

\subsection{Proof of Lemma~\ref{lem:charging1}}
\label{sec:charging1}

Recall that the lemma states that for $\delta \in [1/2,1]$ and for any
$j$,
  $$\sum_{i \neq \ell_j}  \int_{0}^\delta (f(A_j(i,\theta)) -
  f(A_{j-1}(i,\theta))d\theta \ge f(V_j) - f(V_{j-1}) + \alpha_j
  (f(V_{j-1}) - f(V_j)).$$

\begin{proofof}{Lemma~\ref{lem:charging1}}
	Fix $j$ and label $i$. We have $$\int_0^{\delta} (f(A_j(i,
	\theta)) -f(A_{j - 1}(i, \theta))d\theta =
	\int_0^{\min(\delta,x(v_j,i))} (f(A_j(i, \theta))-f(A_{j - 1}(i,
	\theta))d\theta$$ since $A_j(i,\theta) = A_{j-1}(i,\theta)$ when
	$\theta$ is in the interval $(\min(\delta,x(v_j,i)), \delta]$ (or
	the interval is empty).  When $\theta \le x(v_j,i)$ we have
	$A_{j}(i,\theta) = A_{j-1}(i,\theta) + v_j$. Since $f$ is
	submodular and $A_{j - 1}(i, \theta) \subseteq V_{j - 1}$, it
	follows that, for any $\theta \leq x(v_j, i)$, we have
		$$f(A_j(i,\theta)) - f(A_{j-1}(i,\theta)) =
		f(A_{j-1}(i,\theta)+v_j) - f(A_{j-1}(i,\theta)) \ge f(V_{j-1}
		+ v_j) - f(V_{j-1}) = f(V_j) - f(V_{j-1}).$$
	Therefore,
	\begin{align*}
		\int_0^{\delta} (f(A_j(i, \theta)) -f(A_{j - 1}(i,
		\theta))d\theta
		& = \int_0^{\min(\delta,x(v_j,i))} (f(A_j(i, \theta))-f(A_{j -
		1}(i,\theta))d\theta \\
		& \ge \int_0^{\min(\delta,x(v_j,i))} (f(V_j) - f(V_{j-1}))
		d\theta\\
		& = \min(\delta, x(v_j, i)) (f(V_j) - f(V_{j - 1})).
	\end{align*}
	Note that, for any $i \neq \ell_j$, $x(v_j, i) \leq \delta$: if
	$\alpha_j \leq \delta$, the claim follows, since $x(v_j, i) \leq
	\alpha_j$; otherwise, since $\delta \geq 1/2$ and $\sum_i x(v_j,
	i) = 1$, it follows that $x(v_j, i) \leq \delta$ for all $i \neq
	\ell_j$. Therefore, by using the previous bound,
    \begin{align*}
		\sum_{i \neq \ell_j} \int_0^{\delta} (f(A_j(i, \theta) -
		f(A_{j-1}(i, \theta))d\theta
		& \ge \sum_{i \neq \ell_j} \min(\delta, x(v_j, i)) (f(V_j) -
		f(V_{j - 1})) \\
 		& = \sum_{i \neq \ell_j} x(v_j, i) (f(V_j) - f(V_{j - 1})) \\
 		& = (1- x(v_j, \ell_j)) (f(V_j) - f(V_{j - 1})) \\
 		& =  f(V_j) - f(V_{j - 1}) + \alpha_j (f(V_{j - 1}) - f(V_j)).
	\end{align*}
\end{proofof}

\subsection{Proof of Lemma~\ref{lem:charging2}}
\label{sec:charging2}

\noindent
We recall the statement of the lemma.  Let $\delta \in [0,1]$ and let
$h$ be the largest value of $j$ such that $\alpha_j \le \delta$. Then
  $$\sum_{j=1}^n  \left (\int_{0}^\delta (f(A_j(\ell_j,\theta)) -
  f(A_{j-1}(\ell_j,\theta))d\theta\right) \ge \int_0^\delta
  f(A(\theta)) d\theta -  \delta f(\emptyset).$$
Our goal is to obtain a suitable expression that is upper bounded by
the quantity  $\int_{0}^\delta (f(A_j(\ell_j,\theta)) -
f(A_{j-1}(\ell_j,\theta))d\theta$. It turns out that this expression
has several terms and when we sum over all $j$ they telescope to give
us the desired bound.

We begin by simplifying $\int_{0}^\delta (f(A_j(\ell_j,\theta)) -
f(A_{j-1}(\ell_j,\theta))d\theta$ by applying submodularity. The
following proposition follows from the fact that $f$ is submodular and
$A_j(\ell_j, \theta) \subseteq A_j(\theta)$.

\begin{prop} \label{prop:charging2}
	For any $j$ such that $1 \leq j \leq n$,
	$$\int_0^{\delta}\left(f(A_j(\ell_j, \theta)) - f(A_{j -
	1}(\ell_j, \theta))\right)d\theta \geq \int_0^{\min(\alpha_j,
	\delta)}\left(f(A_{j - 1}(\theta) + v_j) - f(A_{j -
	1}(\theta))\right)d\theta.$$
\end{prop}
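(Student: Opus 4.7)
The plan is to split the integral on the left side into its active support, rewrite it as a marginal, and then invoke the diminishing-marginals form of submodularity.

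First I would observe that the sets $A_j(\ell_j,\theta)$ and $A_{j-1}(\ell_j,\theta)$ differ at most by $v_j$: explicitly, $v_j$ belongs to $A_j(\ell_j,\theta)$ iff $x(v_j,\ell_j) \ge \theta$, i.e.\ iff $\theta \le \alpha_j$. Hence for $\theta > \alpha_j$ the two sets coincide and the integrand on the left vanishes, while for $\theta \le \alpha_j$ we have $A_j(\ell_j,\theta) = A_{j-1}(\ell_j,\theta) + v_j$. This collapses the integral to
\[
\int_0^{\delta}\!\left(f(A_j(\ell_j,\theta)) - f(A_{j-1}(\ell_j,\theta))\right)d\theta
= \int_0^{\min(\alpha_j,\delta)}\!\left(f(A_{j-1}(\ell_j,\theta) + v_j) - f(A_{j-1}(\ell_j,\theta))\right)d\theta.
\]

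Next I would apply submodularity of $f$ in the form: whenever $X \subseteq Y$ and $v \notin Y$, $f(X+v) - f(X) \ge f(Y+v) - f(Y)$. Take $X = A_{j-1}(\ell_j,\theta)$, $Y = A_{j-1}(\theta) = \bigcup_{i} A_{j-1}(i,\theta)$, and $v = v_j$. The inclusion $X \subseteq Y$ is immediate, and both sets lie in $V_{j-1}$ and therefore do not contain $v_j$, so the marginals are well-defined. Applying this pointwise in $\theta$ and integrating over $[0,\min(\alpha_j,\delta)]$ yields exactly the claimed inequality.

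There is no real obstacle here; the one small check is the verification that $v_j \notin A_{j-1}(\theta)$, which is the entire reason we set up the inductive order $V_0 \subseteq V_1 \subseteq \cdots \subseteq V_n$ by the $\alpha$-values. Everything else is a straightforward application of submodularity, which is exactly what the proposition is designed to set up before the telescoping argument of Lemma~\ref{lem:charging2}.
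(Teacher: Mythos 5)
Your argument is correct and follows the paper's proof essentially step for step: split at $\theta = \min(\alpha_j,\delta)$ to reduce the left side to a truncated integral of the marginal $f(A_{j-1}(\ell_j,\theta)+v_j) - f(A_{j-1}(\ell_j,\theta))$, then apply diminishing marginals with $A_{j-1}(\ell_j,\theta) \subseteq A_{j-1}(\theta) \subseteq V_{j-1} \not\ni v_j$. Nothing to add.
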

\begin{proof}
	If $\theta \in [0,\min(\delta,\alpha_j)]$ we have
	$A_j(\ell_j,\theta) = A_{j-1}(\ell_j,\theta) + v_j$. If $\theta
	\in (\min(\delta,\alpha_j), \delta]$ then $A_j(\ell_j,\theta) =
	A_{j-1}(\ell_j,\theta)$.  Therefore
	\begin{align*}
		\int_0^{\delta}\left(f(A_j(\ell_j, \theta)) - f(A_{j -
		1}(\ell_j, \theta))\right)d\theta
		& = \int_0^{\min(\delta, \alpha_j)} \left(f(A_j(\ell_j,
		\theta)) - f(A_{j - 1}(\ell_j, \theta))\right)d\theta\\
		& =  \int_0^{\min(\delta, \alpha_j)} \left(f(A_{j - 1}(\ell_j,
		\theta) + v_j) - f(A_{j - 1}(\ell_j, \theta))\right)d\theta
	\end{align*}
	Since $f$ is submodular and $A_{j-1}(\ell_j,\theta) \subseteq
	A_{j-1}(\theta)$, it follows that, for any $\theta \leq \alpha_j$,
		$$f(A_j(\ell_j, \theta)) - f(A_{j - 1}(\ell_j, \theta)) \ge
		f(A_{j - 1}(\theta) + v_j) - f(A_{j - 1}(\theta))$$
	and the proposition follows.
\end{proof}

\bigskip
\noindent
Let $\Delta_j = \int_0^{\alpha_j} \left(f(A_{j - 1}(\theta) + v_j) -
f(A_{j - 1}(\theta))\right)d\theta$, and $\Lambda_j =
\int_{\delta}^{\alpha_j} \left(f(A_{j - 1}(\theta) + v_j) - f(A_{j -
1}(\theta))\right)d\theta$. Note that the right hand side of the
inequality in Proposition~\ref{prop:charging2} is equal to $\Delta_j$
if $j \leq h$, and it is equal to $\Delta_j - \Lambda_j$ otherwise.
Proposition~\ref{prop:Delta} and Proposition~\ref{prop:Lambda} express
$\Delta_j$ and $\Lambda_j$ in a more convenient form.

Let $V_{j', j} = \{v_{j'}, v_{j' + 1}, \cdots, v_j\}$ for all $j'$ and
$j$ such that $j' \leq j$; let $V_{j', j} = \emptyset$ for all $j'$
and $j$ such that $j' > j$.

\begin{prop} \label{prop:Delta}
	$$\Delta_j = \sum_{j' = 1}^n (\alpha_{j'} - \alpha_{j' - 1})
	(f(V_{j', j}) - f(V_{j', j - 1})).$$
\end{prop}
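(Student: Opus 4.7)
The plan is to expand $\Delta_j = \int_0^{\alpha_j} \bigl(f(A_{j-1}(\theta)+v_j) - f(A_{j-1}(\theta))\bigr)\,d\theta$ by partitioning the interval $[0,\alpha_j]$ along the breakpoints $0 = \alpha_0 \le \alpha_1 \le \cdots \le \alpha_j$, and identifying the integrand explicitly on each sub-interval. The key structural fact is that the ordering of the vertices by $\alpha$-values already tells us, for each $\theta$, exactly which vertices of $V_{j-1}$ are allocated.

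More precisely, first I would observe that the very same argument behind Proposition~\ref{prop:allocated-helper} applies to the restriction $\bx_{j-1}$ of $\bx$ to $V_{j-1}$, since $A_{j-1}(i,\theta) = A(i,\theta) \cap V_{j-1}$ and the $\alpha$-ordering is inherited. Thus for any $\theta \in (\alpha_{j'-1}, \alpha_{j'}]$ with $j' \le j-1$, we have $A_{j-1}(\theta) = V_{j-1} - V_{j'-1} = V_{j',\,j-1}$, and for $\theta \in (\alpha_{j-1}, \alpha_j]$ (the $j'=j$ block) we have $A_{j-1}(\theta) = \emptyset = V_{j,\,j-1}$ by the convention on $V_{j',j}$. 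In either case, since $v_j \notin V_{j-1} \supseteq A_{j-1}(\theta)$, the augmented set is $A_{j-1}(\theta)+v_j = V_{j',\,j}$.

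The second step is the arithmetic. Splitting the integral gives
\begin{align*}
\Delta_j &= \sum_{j'=1}^{j} \int_{\alpha_{j'-1}}^{\alpha_{j'}} \bigl(f(V_{j',\,j}) - f(V_{j',\,j-1})\bigr)\, d\theta \\
&= \sum_{j'=1}^{j} (\alpha_{j'} - \alpha_{j'-1})\bigl(f(V_{j',\,j}) - f(V_{j',\,j-1})\bigr).
\end{align*}
Finally, to extend the sum up to $n$ as written in the proposition, I would invoke the convention that $V_{j',j} = \emptyset$ for $j' > j$, so $f(V_{j',j}) - f(V_{j',j-1}) = 0$ for such terms and they contribute nothing.

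There is no substantive obstacle here; the statement is essentially a bookkeeping identity once one recognizes that the interval $(\alpha_{j'-1},\alpha_{j'}]$ is exactly where $A_{j-1}(\theta)$ is constant and equal to $V_{j',\,j-1}$. The only point to be careful about is the edge case $j'=j$, where the formula $A_{j-1}(\theta) = V_{j',\,j-1}$ must be interpreted via the convention $V_{j,\,j-1} = \emptyset$; and the extension of the summation range from $1\le j' \le j$ to $1 \le j' \le n$, which must be justified by the same convention applied to $V_{j',j}$ for $j' > j$.
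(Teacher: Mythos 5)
Your proof is correct and follows essentially the same route as the paper's: both decompose $[0,\alpha_j]$ along the breakpoints $\alpha_0,\ldots,\alpha_j$, identify $A_{j-1}(\theta) = V_{j',\,j-1}$ and $A_{j-1}(\theta)+v_j = V_{j',\,j}$ on $(\alpha_{j'-1},\alpha_{j'}]$ via Proposition~\ref{prop:allocated-helper}, and extend the sum to $n$ by the convention $V_{j',j}=\emptyset$ for $j'>j$. The only difference is that you are a bit more explicit about the edge case $j'=j$ and about why the identification carries over to the restriction $A_{j-1}(\theta)$, which the paper leaves implicit.
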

\begin{proof}
	It follows Proposition~\ref{prop:allocated-helper} that, if
	$\theta \in (\alpha_{j' - 1}, \alpha_{j'}]$, $A(\theta) = V_{j',
	n}$ and $A_j(\theta) = V_{j', j}$. Therefore
	\begin{align*}
		\Delta_j = \int_0^{\alpha_j}
		(f(A_{j-1}(\theta)+v_j)-f(A_{j-1}(\theta))d\theta
		&= \sum_{j' = 1}^j \int_{\alpha_{j' - 1}}^{\alpha_{j'}}
		\left(f\left(A_{j - 1}(\theta) + v_j\right) - f\left( A_{j -
		1}(\theta) \right)\right) d\theta\\
		&= \sum_{j' = 1}^j (\alpha_{j'} - \alpha_{j' - 1}) (f(V_{j',
		j}) - f(V_{j', j - 1}))\\
		&= \sum_{j' = 1}^n (\alpha_{j'} - \alpha_{j' - 1}) (f(V_{j',
		j}) - f(V_{j', j - 1})).
	\end{align*}
	The last line follows from the fact that, if $j' > j$, $V_{j', j}
	= V_{j', j - 1} = \emptyset$.
\end{proof}

\noindent \medskip
The corollary below follows by simple algebraic manipulation and is moved
to Appendix~\ref{app:charging2}.

\begin{corollary} \label{cor:Delta}
	$$\sum_{j = 1}^n \Delta_j = \sum_{j = 1}^n \alpha_j (f(V - V_{j -
	1}) - f(V - V_j)).$$
\end{corollary}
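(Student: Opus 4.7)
The plan is to start from the explicit double sum given by Proposition~\ref{prop:Delta}, interchange the order of summation so that the inner sum telescopes, and then rearrange the remaining single sum via an Abel-type identity to match the right-hand side. The only obstacle is careful bookkeeping of the boundary terms, since the claim involves mixed sums of $f(V-V_{j-1})$ and $f(V-V_j)$; there is no further use of submodularity here, the statement is an algebraic rearrangement.

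First, I substitute the expression from Proposition~\ref{prop:Delta} to obtain
$$\sum_{j=1}^n \Delta_j = \sum_{j=1}^n \sum_{j'=1}^n (\alpha_{j'} - \alpha_{j'-1})\bigl(f(V_{j',j}) - f(V_{j',j-1})\bigr),$$
and then swap the order of summation to bring $(\alpha_{j'}-\alpha_{j'-1})$ outside and sum over $j$ first. For fixed $j'$ the inner sum telescopes, so that
$$\sum_{j=1}^n \bigl(f(V_{j',j}) - f(V_{j',j-1})\bigr) = f(V_{j',n}) - f(V_{j',0}) = f(V - V_{j'-1}) - f(\emptyset),$$
using $V_{j',n} = \{v_{j'},\ldots,v_n\} = V - V_{j'-1}$ and $V_{j',0} = \emptyset$ since $j' \ge 1$.

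Combining these two steps yields
$$\sum_{j=1}^n \Delta_j = \sum_{j'=1}^n (\alpha_{j'} - \alpha_{j'-1}) f(V - V_{j'-1}) \; - \; f(\emptyset) \sum_{j'=1}^n (\alpha_{j'} - \alpha_{j'-1}),$$
and the second sum telescopes to $\alpha_n - \alpha_0 = \alpha_n$, so (noting $f(\emptyset) = f(V - V_n)$) we get
$$\sum_{j=1}^n \Delta_j = \sum_{j'=1}^n (\alpha_{j'} - \alpha_{j'-1}) f(V - V_{j'-1}) - \alpha_n f(V - V_n).$$
Finally, I would apply summation by parts (Abel's identity) to the right-hand side of the claim: splitting $\sum_{j=1}^n \alpha_j(f(V - V_{j-1}) - f(V - V_j))$ into two sums and re-indexing the second by $j \mapsto j-1$ produces exactly the same expression, using $\alpha_0 = 0$ to absorb the $j'=1$ boundary term. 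This establishes the equality.
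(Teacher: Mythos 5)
Your proof is correct and follows essentially the same route as the paper: substitute the formula from Proposition~\ref{prop:Delta}, swap the order of summation so the inner sum telescopes to $f(V_{j',n}) - f(\emptyset) = f(V - V_{j'-1}) - f(\emptyset)$, and then apply Abel summation to pass between the $(\alpha_{j'} - \alpha_{j'-1})f(\cdot)$ form and the $\alpha_j\bigl(f(\cdot)-f(\cdot)\bigr)$ form. The only cosmetic difference is that the paper carries the Abel step forward from the $\Delta_j$-side while you verify it from the claim's right-hand side; the underlying algebra and boundary-term bookkeeping (using $\alpha_0=0$ and $V_n=V$) are the same.
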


\noindent \medskip
We now consider $\Gamma_j$. 

\begin{prop} \label{prop:Lambda}
	For all $j > h$ where $h$ is the largest index such that $\alpha_h
	\le \delta$,
	$$\Lambda_j = (\alpha_{h+1} - \delta) (f(V_{h+1, j}) - f(V_{h, j -
	1})) +  \sum_{j' = h + 2}^n (\alpha_{j'} - \alpha_{j' -
	1})(f(V_{j', j}) - f(V_{j', j - 1})).$$
\end{prop}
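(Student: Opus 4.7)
The plan is to decompose the interval of integration at the natural breakpoints $\alpha_{h+1} < \alpha_{h+2} < \cdots < \alpha_j$ and use Proposition~\ref{prop:allocated-helper} to identify $A_{j-1}(\theta)$ as a set that is piecewise constant on the resulting sub-intervals. Since $\alpha_h \le \delta < \alpha_{h+1}$, the first sub-interval is $[\delta, \alpha_{h+1}]$ and the remaining ones are $(\alpha_{j'-1}, \alpha_{j'}]$ for $j' = h+2, \ldots, j$, so I can rewrite
\[
\Lambda_j = \int_\delta^{\alpha_{h+1}}\!\!\bigl(f(A_{j-1}(\theta)+v_j) - f(A_{j-1}(\theta))\bigr)d\theta + \sum_{j'=h+2}^{j}\int_{\alpha_{j'-1}}^{\alpha_{j'}}\!\!\bigl(f(A_{j-1}(\theta)+v_j) - f(A_{j-1}(\theta))\bigr)d\theta.
\]

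Next, I would evaluate the integrand on each piece. For $\theta \in [\delta, \alpha_{h+1}]\subseteq (\alpha_h,\alpha_{h+1}]$, Proposition~\ref{prop:allocated-helper} gives $A(\theta) = V - V_h$, and intersecting with $V_{j-1}$ yields $A_{j-1}(\theta) = V_{h+1,\,j-1}$; adding $v_j$ gives $V_{h+1,\,j}$, so the integrand is the constant $f(V_{h+1,j}) - f(V_{h+1,j-1})$ and contributes $(\alpha_{h+1}-\delta)(f(V_{h+1,j}) - f(V_{h+1,j-1}))$. For each $j' \in \{h+2,\ldots,j\}$, on $(\alpha_{j'-1}, \alpha_{j'}]$ the same proposition gives $A_{j-1}(\theta) = V_{j',\,j-1}$ and hence $A_{j-1}(\theta)+v_j = V_{j',\,j}$, so the integrand is constant and the sub-integral is $(\alpha_{j'}-\alpha_{j'-1})(f(V_{j',j}) - f(V_{j',j-1}))$. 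Finally, I extend the summation range to $j' = h+2,\ldots,n$: for $j' > j$ the convention $V_{j',j} = V_{j',j-1} = \emptyset$ kills those terms, so nothing changes. Collecting the pieces produces the claimed identity.

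This is essentially a careful bookkeeping argument, and no real obstacle arises; the only subtleties worth checking are the degenerate case $j = h+1$ (where the sum over $j' \ge h+2$ is empty and $V_{h+1,j-1} = \emptyset$, so the formula correctly reduces to $(\alpha_{h+1}-\delta)(f(\{v_{h+1}\}) - f(\emptyset))$) and coincidences among the $\alpha_{j'}$'s (which merely produce length-zero sub-intervals contributing nothing). Note also that, unlike Lemma~\ref{lem:charging1} or Proposition~\ref{prop:charging2}, no appeal to submodularity is needed here: the identity is purely a consequence of the piecewise-constant structure of $A_{j-1}(\theta)$ inherited from Proposition~\ref{prop:allocated-helper}.
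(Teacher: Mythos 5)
Your proof is correct and follows essentially the same route as the paper's: both decompose $[\delta,\alpha_j]$ at the breakpoints $\delta,\alpha_{h+1},\ldots,\alpha_j$ and use Proposition~\ref{prop:allocated-helper} to read off the piecewise-constant value of $A_{j-1}(\theta)$ on each piece, with the only cosmetic difference being that the paper introduces the notation $\beta_h=\delta$, $\beta_{j'}=\alpha_{j'}$ (for $j'>h$) and splits out the $j'=h+1$ term at the end, while you treat the first sub-interval $[\delta,\alpha_{h+1}]$ directly. One small point worth flagging: your derivation (correctly) yields $f(V_{h+1,j-1})$ in the first term, whereas the displayed statement of the proposition reads $f(V_{h,j-1})$; that is a typo in the paper, and the form you derived is the one actually used in the proof of Corollary~\ref{cor:Lambda}.
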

\begin{proof}
  For notational convenience let $\beta_h = \delta$ and $\beta_j =
  \alpha_j$ for all $j > h$. It follows
  Proposition~\ref{prop:allocated-helper} that, if $\theta \in
  (\beta_{j' - 1}, \beta_{j'}]$, $A(\theta) = V_{j', n}$ and
  $A_j(\theta) = V_{j', j}$. Therefore

	\begin{eqnarray*}
		\int_{\delta}^{\alpha_j} (f(A_{j - 1}(\theta) + v_j) - f(A_{j
		- 1}(\theta)))d\theta &=& \sum_{j' = h + 1}^j \int_{\beta_{j'
		- 1}}^{\beta_{j'}}(f(A_{j - 1}(\theta) + v_j) - f(A_{j -
		1}(\theta)))d\theta\\
		&=& \sum_{j' = h + 1}^j (\beta_{j'} - \beta_{j' - 1})(f(V_{j',
		j}) - f(V_{j', j - 1}))\\
		&=& \sum_{j' = h + 1}^n (\beta_{j'} - \beta_{j' - 1})(f(V_{j',
		j}) - f(V_{j', j - 1})).
	\end{eqnarray*}
	The last line follows from the fact that, if $j' > j$, $V_{j', j}
	= V_{j', j - 1} = \emptyset$. The lemma follows by noting that
    \begin{align*}
		\sum_{j' = h + 1}^n (\beta_{j'} - \beta_{j' - 1})(f(V_{j', j})
		- f(V_{j', j - 1}))
		& = (\alpha_{h+1} - \delta) (f(V_{h+1, j}) - f(V_{h, j - 1}))\\
		& +  \sum_{j' = h + 2}^n (\alpha_{j'} - \alpha_{j' -
		1})(f(V_{j', j}) - f(V_{j', j - 1})).
	\end{align*}
\end{proof}

\noindent \medskip
The corollary below follows by simple algebraic manipulation and is moved
to Appendix~\ref{app:charging2}.
\begin{corollary} \label{cor:Lambda}
	$$\sum_{j = h + 1}^n \Lambda_j = \sum_{j = h + 1}^n \alpha_j(f(V -
	V_{j - 1}) - f(V - V_j)) - \delta(f(V - V_h) - f(\emptyset)).$$
\end{corollary}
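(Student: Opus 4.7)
The plan is to parallel the derivation used for Corollary~\ref{cor:Delta}: start from the expression for $\Lambda_j$ given by Proposition~\ref{prop:Lambda}, swap the order of summation so that an inner telescoping sum collapses, and then use summation by parts to obtain the stated form. The only new feature compared with the $\Delta_j$ case is a boundary correction at $j=h+1$, which will ultimately produce the $-\delta(f(V-V_h) - f(\emptyset))$ term.

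Concretely, I would first unify the two pieces of Proposition~\ref{prop:Lambda} by setting $c_{h+1} := \alpha_{h+1} - \delta$ and $c_{j'} := \alpha_{j'} - \alpha_{j'-1}$ for $j' \ge h+2$, so that (using the convention $V_{j',j} = \emptyset$ whenever $j' > j$)
$$\Lambda_j = \sum_{j'=h+1}^n c_{j'}\bigl(f(V_{j',j}) - f(V_{j',j-1})\bigr).$$
Summing over $j$ from $h+1$ to $n$ and exchanging the order of summation gives
$$\sum_{j=h+1}^n \Lambda_j = \sum_{j'=h+1}^n c_{j'} \sum_{j=h+1}^n \bigl(f(V_{j',j}) - f(V_{j',j-1})\bigr).$$
For each fixed $j' \ge h+1$, the inner sum telescopes; since $V_{j',h} = \emptyset$ (because $j' > h$) and $V_{j',n} = V - V_{j'-1}$, its value is $f(V - V_{j'-1}) - f(\emptyset)$. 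Hence
$$\sum_{j=h+1}^n \Lambda_j = \sum_{j'=h+1}^n c_{j'}\bigl(f(V-V_{j'-1}) - f(\emptyset)\bigr).$$

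The remaining step is a purely algebraic summation-by-parts manipulation. Writing $F_j := f(V - V_j)$ and using $\sum_{j'=h+1}^n c_{j'} = \alpha_n - \delta$ together with $F_n = f(\emptyset)$, I would split off the $f(\emptyset)$ pieces and reindex $\sum c_{j'} F_{j'-1}$ to compare it against $\sum_{j'=h+1}^n \alpha_{j'}(F_{j'-1} - F_{j'})$. The discrepancy coming from the reindexing together with the collected $f(\emptyset)$ coefficient combine to give exactly $-\delta(F_h - f(\emptyset))$, which is the stated correction. The only delicate aspect is bookkeeping the anomalous first coefficient $c_{h+1} = \alpha_{h+1} - \delta$ (in place of $\alpha_{h+1} - \alpha_h$): that single modification is precisely what injects the boundary term $-\delta(f(V-V_h) - f(\emptyset))$ into the final identity. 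No further use of submodularity or of the structure of $f$ is needed beyond Proposition~\ref{prop:Lambda}; everything past that point is telescoping and reindexing.
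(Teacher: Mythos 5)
Your proposal is correct and follows essentially the same route as the paper's appendix proof: the paper introduces $\beta_h = \delta$ and $\beta_{j'} = \alpha_{j'}$ for $j' > h$ (so your $c_{j'}$ is exactly $\beta_{j'} - \beta_{j'-1}$), swaps the order of summation, telescopes the inner sum using $V_{j',h} = \emptyset$ and $V_{j',n} = V - V_{j'-1}$, and then performs the same summation-by-parts reindexing, with the boundary term $\beta_h = \delta$ producing $-\delta\bigl(f(V-V_h) - f(\emptyset)\bigr)$. The only thing worth noting is that you implicitly correct a small typo in the statement of Proposition~\ref{prop:Lambda} (its first term should read $f(V_{h+1,j-1})$ rather than $f(V_{h,j-1})$ to match its own proof), which is what makes your unified form $\Lambda_j = \sum_{j'=h+1}^n c_{j'}\bigl(f(V_{j',j}) - f(V_{j',j-1})\bigr)$ valid.
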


\noindent \medskip
Now we finish the proof.

\begin{proofof}{Lemma~\ref{lem:charging2}}
We apply Proposition~\ref{prop:charging2} in the first inequality below,
and then Corollary~\ref{cor:Delta} and Corollary~\ref{cor:Lambda} 
to derive the third line from the second.
\begin{eqnarray*}
  \sum_{j = 1}^n \int_0^{\delta}\left( f(A_j(\ell_j, \theta)) -
    f(A_{j - 1}(\ell_j, \theta)) \right)d\theta
  &\geq& \sum_{j = 1}^n \int_0^{\min(\alpha_j,
    \delta)}\left(f(A_{j - 1}(\theta) + v_j) - f(A_{j -
      1}(\theta))\right)d\theta\\
  &=& \sum_{j = 1}^n \Delta_j - \sum_{j = h + 1}^n \Lambda_j\\
  &=& \sum_{j = 1}^h \alpha_j (f(V - V_{j - 1}) - f(V - V_j)) +
  \delta(f(V - V_h) - f(\emptyset))\\
  &=& \int_0^{\delta} f(A(\theta)) d\theta - \delta f(\emptyset).
\end{eqnarray*}
The last equality follows from Proposition~\ref{prop:allocated}.
\end{proofof}

\section{Conclusions and Open Problems}
The main open question is whether the integrality gap of \SubMPRel for
\SymSubMP is stricly smaller than the bound of $1.5-1/k$ we showed in
this paper. Karger \etal \cite{KargerKSTY99} rely extensively on the
geometry of the simplex to obtain a bound of $1.3438$ for \MC via the
relaxation from \cite{CalinescuKR98}. However, we mention that the
rounding algorithms used in \cite{KargerKSTY99} have natural analogues
for rounding \SubMPRel but analyzing them is quite challenging 
for an arbitrary symmetric submodular function.

Zhao, Nagamochi and Ibaraki \cite{ZhaoNI05} considered a common
generalization of \SubMP and \kSubMP where we are given a set $S$ of
terminals with $|S| \ge k$ and the goal is to partition $V$ into $k$
sets $A_1,\ldots, A_k$ such that each $A_i$ contains at least one
terminal and $\sum_{i=1}^k f(A_i)$ is minimized. Note that when $|S| =
k$ we get $\SubMP$ and when $S = V$ we get $\kSubMP$. The advantage
of the greedy splitting algorithms developed in \cite{ZhaoNI05} is
that they extend to these more general problems.  However, unlike the
case of \SubMP, there does not appear to be an easy way to write a
relaxation for this more general problem; in the special case
of graphs such a relaxation has been developed \cite{ChekuriGN06}.
An important open problem here is whether the $k$-way cut problem
in graphs admits an approximation better than $2(1-1/k)$. 

Related to the above questions is the complexity of \kSubMP when $k$
is a fixed constant.  For \SymSubMP a polynomial-time algorithm was
claimed in \cite{Queyranne99} although no formal proof has been
published; this generalizes the polynomial-time algorithm for graph
$k$-cut problem first developed by Goldschmidt and Hochbaum
\cite{GoldschmidtH94}. There has been particular interest in the
special case of \kSubMP, namely, the hypergraph $k$-cut problem; a
polynomial time algorithm for $k=3$ was developed in \cite{Xiao10}
and extended to \SubMP in \cite{OkumotoFN10}. Fukunaga \cite{Fukunaga10}
gave polynomial time algorithms when $k$ and the maximum hyperedge size
are both fixed. The following is an an open problem. Does
the hypergraph $k$-cut problem for $k=4$ have a polynomial time algorithm?


\newpage
\appendix

\section{Omitted proofs from Section~\ref{sec:charging2}}
\label{app:charging2}

\begin{proofof}{Corollary~\ref{cor:Delta}}
	It follows from Proposition~\ref{prop:Delta} that
	\begin{eqnarray*}
		\sum_{j = 1}^n \Delta_j &=& \sum_{j' = 1}^n (\alpha_{j'} -
		\alpha_{j' - 1}) \sum_{j = 1}^n (f(V_{j', j}) - f(V_{j', j -
		1}))\\
		&=& \sum_{j' = 1}^n (\alpha_{j'} - \alpha_{j' - 1}) (f(V_{j',
		n}) - f(V_{j', 0}))\\
		&=& \sum_{j' = 1}^n (\alpha_{j'} - \alpha_{j' - 1}) (f(V_{j',
		n}) - f(\emptyset))\\
		&=& \sum_{j' = 1}^n \alpha_{j'} (f(V_{j', n}) - f(\emptyset))
		- \sum_{j' = 0}^{n - 1} \alpha_{j'} (f(V_{j' + 1, n}) -
		f(\emptyset))\\
		&=& \sum_{j' = 1}^n \alpha_{j'} (f(V_{j', n}) - f(V_{j' + 1,
		n})) + \alpha_n (f(V_{n + 1, n}) - f(\emptyset))\\
		&=& \sum_{j' = 1}^n \alpha_{j'} (f(V_{j', n}) - f(V_{j' + 1,
		n}))\\
		&=& \sum_{j' = 1}^n \alpha_{j'}(f(V - V_{j' - 1}) - f(V -
		V_{j'}))
	\end{eqnarray*}
\end{proofof}

\begin{proofof}{Corollary~\ref{cor:Lambda}}
	For notational convenience, let $\beta_h = \delta$ and $\beta_j =
	\alpha_j$ for all $j > h$.  It follows from
	Proposition~\ref{prop:Lambda} that
	\begin{eqnarray*}
		\sum_{j = h + 1}^n \Lambda_j
		&=& \sum_{j' = h + 1}^n (\beta_{j'} - \beta_{j' - 1}) \sum_{j
		= h + 1}^n (f(V_{j', j}) - f(V_{j', j - 1}))\\
		&=& \sum_{j' = h + 1}^n (\beta_{j'} - \beta_{j' - 1})(f(V_{j',
		n}) - f(V_{j', h}))\\
		&=& \sum_{j' = h + 1}^n (\beta_{j'} - \beta_{j' - 1})(f(V_{j',
		n}) - f(\emptyset))\\
		&=& \sum_{j' = h + 1}^n \beta_{j'} (f(V_{j', n}) - f(\emptyset)) -
		\sum_{j' = h}^{n - 1}\beta_{j'} (f(V_{j' + 1, n}) - f(\emptyset))\\
		&=& \sum_{j' = h + 1}^n \beta_{j'} (f(V_{j', n}) - f(V_{j' +
		1, n})) - \beta_h (f(V_{h + 1}, n) - f(\emptyset)) + \beta_n
		(f(V_{n + 1, n}) - f(\emptyset))\\
		&=& \sum_{j' = h + 1}^n \alpha_{j'} (f(V - V_{j' - 1}) - f(V -
		V_{j'})) - \delta (f(V - V_h) - f(\emptyset))
	\end{eqnarray*}
\end{proofof}


\end{document}